\documentclass[journal,final,twocolumn,twoside]{IEEEtran}
\usepackage{amsmath,amsfonts}
\usepackage{algorithmic}
\usepackage{algorithm}
\usepackage{amssymb}
\usepackage{subfigure}
\usepackage{color}
\usepackage{threeparttable}

\allowdisplaybreaks[4]
\usepackage{float} 

\usepackage{array}
\usepackage[caption=false,font=normalsize,labelfont=sf,textfont=sf]{subfig}
\usepackage{textcomp}
\usepackage{stfloats}
\usepackage{url}
\usepackage{bm}
\usepackage{verbatim}
\usepackage{multirow} 
\usepackage{ntheorem}
\usepackage{graphicx}
\usepackage{cite}
\hyphenation{op-tical net-works semi-conduc-tor IEEE-Xplore}
% updated with editorial comments 8/9/2021
\theorembodyfont{\upshape}
\theoremheaderfont{\it}
\qedsymbol{\square}
\theoremseparator{:}
\newtheorem{theorem}{\indent Theorem}
\newtheorem{lemma}{\indent Lemma}
\newtheorem*{proof}{\indent Proof}
\newtheorem{remark}{\indent Remark}
\newtheorem{corollary}{\indent Corollary}

\makeatletter

\newcommand{\Rmnum}[1]{\expandafter\@slowromancap\romannumeral #1@}
\makeatother
\begin{document}

\title{\huge Wireless Federated Learning over Resource-Constrained Networks: Digital versus Analog Transmissions}

\author{Jiacheng~Yao,~\IEEEmembership{Graduate Student~Member,~IEEE,}
        Wei~Xu,~\IEEEmembership{Senior~Member,~IEEE,}
        Zhaohui~Yang,~\IEEEmembership{Member,~IEEE,}
        Xiaohu~You,~\IEEEmembership{Fellow,~IEEE,}
        Mehdi~Bennis,~\IEEEmembership{Fellow,~IEEE,}
        and H. Vincent Poor,~\IEEEmembership{Life~Fellow,~IEEE}
%        % <-this % stops a space
\thanks{Part of this work is presented in \emph{IEEE ICC 2024} \cite{icc}.}
\thanks{J. Yao, W. Xu and X. You are with the National Mobile Communications Research Laboratory (NCRL), Southeast University, Nanjing 210096, China (\{jcyao, wxu\}@seu.edu.cn).}
\thanks{Zhaohui Yang is with the Zhejiang Lab, Hangzhou 311121, China, and also with the College of Information Science and Electronic Engineering, Zhejiang University, Hangzhou, Zhejiang 310027, China (yang\_zhaohui@zju.edu.cn).}
\thanks{Mehdi Bennis is with the Center for Wireless Communications, Oulu
University, Oulu 90014, Finland (e-mail: mehdi.bennis@oulu.fi).}
\thanks{H. Vincent Poor is with the Department of Electrical and Computer Engineering, Princeton University, NJ 08544 USA (e-mail: poor@princeton.edu).}
}

% The paper headers
%\markboth{Journal of \LaTeX\ Class Files,~Vol.~14, No.~8, August~2021}%
%{Shell \MakeLowercase{\textit{et al.}}: A Sample Article Using IEEEtran.cls for IEEE Journals}
%
%\IEEEpubid{0000--0000/00\$00.00~\copyright~2021 IEEE}
% Remember, if you use this you must call \IEEEpubidadjcol in the second
% column for its text to clear the IEEEpubid mark.

\maketitle

\begin{abstract}
To enable wireless federated learning (FL) in communication resource-constrained networks, two communication schemes, i.e., digital and analog ones, are effective solutions. In this paper, we quantitatively compare these two techniques, highlighting their essential differences as well as respectively suitable scenarios. We first examine both digital and analog transmission schemes, together with a unified and fair comparison framework under imbalanced device sampling, strict latency targets, and transmit power constraints. A universal convergence analysis under various imperfections is established for evaluating the performance of FL over wireless networks. These analytical results reveal that the fundamental difference between the digital and analog communications lies in whether communication and computation are jointly designed or not. The digital scheme decouples the communication design from FL computing tasks, making it difficult to support uplink transmission from massive devices with limited bandwidth and hence the performance is mainly communication-limited. In contrast, the analog communication allows over-the-air computation (AirComp) and achieves better spectrum utilization. However, the computation-oriented analog transmission reduces power efficiency, and its performance is sensitive to computation errors from imperfect channel state information (CSI). Furthermore, device sampling for both schemes are optimized and differences in sampling optimization are analyzed. Numerical results verify the theoretical analysis and affirm the superior performance of  the sampling optimization.
\end{abstract}

\begin{IEEEkeywords}
Federated learning (FL), digital communication, over-the-air computation (AirComp), convergence analysis.
\end{IEEEkeywords}

\section{Introduction}
\IEEEPARstart{T}{he} dramatic development of data science has catalyzed significant advances in artificial intelligence (AI), which is driving innovation for anticipated sixth-generation (6G) mobile networks. The integration of AI and communication is envisioned to drive the shift from connected things to ubiquitous connected intelligence in wireless networks, supporting a large number of emerging intelligent applications \cite{6g1,6g2,ris2,xujindan}. 
Nonetheless, traditional centralized learning paradigms depend on extensive data transmission and considerable computational resources at cloud servers, which is challenging to implement in wireless networks. To better embrace AI, edge learning (EL) is viewed as a promising distributed learning technique that harnesses massive data and computational capacity available in edge devices distributed across wireless networks \cite{xu,pushai,mzchen}. Distinguishing it from the traditional separate design~for computation and communication, EL integrates the two and achieves efficient utilization of resources and improves performance through learning task-oriented communication design.

In particular, a key EL paradigm, namely federated learning (FL), has garnered significant attention from both academic and industrial circles, primarily due to its communication-efficient and privacy-enhancing characteristics \cite{fl1, fl2}. In FL, distributed edge devices utilize local datasets to collaboratively train a shared learning model with the assistance of a central parameter server (PS). By exchanging model parameters instead of raw data, the PS iteratively updates the global model until convergence. FL scheme minimizes the amount of transmitted data, as well as helping safeguard privacy and security. Recent studies have explored implementation of FL algorithms at wireless edge to support emerging AI applications \cite{lsfan1, wfl1, energy, wf2}. However, limited communication resources has posed a significant bottleneck to the performance of wireless FL \cite{sp,mz2}. One particular concern regards the uplink transmission process, where numerous participating devices need to transmit local updates to the PS, leading to a substantial increase in communication overhead and transmission latency \cite{mz3}. Hence, the development of efficient uplink transmission is crucial to enable wireless FL.

To support data transmission in wireless FL, digital communication schemes have been widely considered in recent works, where local updates are quantized into finite bits and then transmitted to the PS via traditional frequency division multiple access (FDMA) and time division multiple access (TDMA) schemes. At the receiver, the PS relies on channel coding for error detection and correction, before model aggregation using the received local updates. In \cite{wfl1} and \cite{unreliable}, the authors characterized the impact of packet errors on the convergence of FL, which enabled a task-oriented communication resource allocation scheme. The influence of various finite-precision quantization schemes in uplink and downlink communications was considered in \cite{quan1}. Building upon convergence analysis of the quantized FL, the quantization bits allocation was optimized in \cite{quan2} and \cite{quan3} to adapt channel diversity and requirements of the FL tasks. To further alleviate the communication bottleneck, one-bit quantization technique and reconfigurable intelligent surface (RIS) were used in \cite{onebit_RIS} to reduce communication overhead and enhance communication reliability, respectively.
Apart from resource allocation methods, modifications from the algorithmic perspective have been considered to combat unreliable transmissions. In \cite{ye}, the authors proposed a user datagram protocol (UDP)-based robust training algorithm, which  asymptotically achieved the same convergence rate as that with error-free communications. Moreover in \cite{gomore}, for replacing erroneous local updates, a global model reusing scheme, namely the GoMORE scheme, was devised to successfully mitigate the negative impacts of packet loss.  Alternatively, another solution is to further squeeze the communication overhead, thus improving the convergence over resource-constrained networks. The model pruning in \cite{pruning} was seen to be an effective way to compress the large-scale model into a smaller size, facilitating communication-efficient FL design.

In addition to these digital communication schemes, analog communication is an alternative communication-efficient way for deploying wireless FL. In particular, the local updates are amplitude-modulated and then simultaneously transmitted by reusing the available radio resource. Due to the superposition property of radio channels, the global model can be computed automatically over-the-air, which is therefore referred to as over-the-air computation (AirComp) \cite{poor}. Unlike the digital paradigm, analog communication pushes model aggregation from the PS to the air, which not only functionally but physically integrates the computation and communication. Benefiting from the over-the-air aggregation, the communication latency is substantially reduced  and the spectrum utilization is much more efficient, leading to fast-convergent and communication-efficient FL. It was shown in \cite{analog1} that the convergence rate of centralized learning remains approachable with this analog approach without power control and beamforming. Furthermore in \cite{zhu}, to combat deep fading, a novel truncated channel inversion scheme was proposed to exclude devices experiencing deep fades from the training process avoiding excessive energy consumption. Further insights into analog aggregation schemes were also discussed in the context of fundamental trade-offs between communication and learning. Besides, the impact of over-the-air aggregation errors on  optimality gap was analyzed in \cite{zhu2} and \cite{powercontrol} with power control optimization. Furthermore, the authors in \cite{jointpower} proposed an AirComp-based adaptive reweighing scheme for the aggregation, and jointly considered the power control and device selection deign based on the derived optimality gap. To combat the additive noise, robust FL training methods were proposed in \cite{robust} for both the expectation-based and the worst-case noise models. Considering multi-antenna scenarios, the beamforming design at the receiver was optimized by solving a sparse and low-rank optimization problem in \cite{shi}. In practice, considering the lack of perfect channel state information (CSI) for accurate power control, the work \cite{imperfect} investigated the impact of CSI uncertainty at the transmitter on FL convergence and revealed that CSI imperfection plays an key factor affecting the AirComp performance and convergence.

As mentioned above, by incorporating learning task-oriented resource allocation, both digital and analog transmissions are effective ways to fulfill the communication requirements of wireless FL \cite{haoy,wei,privacy}. 
In traditional communication for data transmission, digital communication schemes have been proven not only in theory but also in practice as dominantly outperforming analog communication techniques in almost all cases of interest. In communications for computation tasks, however, analog communication has shown  to be exceptionally effective in some cases of resource-constrained networks \cite{comp}. Hence, it is of interest to comprehensively compare digital and analog transmissions for wireless FL. Several recent studies have compared the two communication paradigms from some specific perspectives, including communication latency \cite{zhu,deploy} and convergence performance \cite{wf3,d2d}. However, to the best of our knowledge, there is a lack of literature that presents a comprehensive and quantitative comparison between the two fundamental communication paradigms, especially under practical constraints. Also, there have been few attempts to elucidate the fundamental differences between digital and analog transmissions in the context of FL, which is crucial for its deployment and design.

Against this background, in this paper, we conduct a theoretical comparison between the digital and analog transmission schemes under practical constraints.
The main contributions of this paper are summarized as follows.
\begin{itemize} 
\item We propose a unified framework for digital and analog transmissions in wireless FL, and characterize the model aggregation distortion caused by wireless transmission schemes. Using this framework, a fair comparison is conducted under the consideration of a stringent transmission delay target and two types of transmit power budgets. We exploit optimality gap, defined by the gap between the optimal and actually achieved loss function value, to characterize the convergence behavior and establish a stringent upper bound of the optimality gap for precise analysis and optimization in the digital/analog transmission enabled wireless FL.
It offers a precise characterization of the influence of wireless transmission imperfections on convergence in closed-form. 

%It captures the impacts of various factors, including additional noise, partial participation, coefficient distortion, and non-independent and non-identically distributed (non-IID) datasets.

\item Analytical results reveal that the digital transmission is hard to achieve satisfactory performance especially with limited radio resources due to orthogonal access and decoupled design. In contrast, the analog scheme exhibits a performance gain in terms of the optimality gap of the order of $\frac{1}{N}$ with the increasing number of participating devices, $N$, and thereby achieving a higher level of efficiency in spectrum utilization. However, the introduction of computation goals in the analog communication process results in less efficient transmit power utilization, and the presence of CSI uncertainties inevitably comes with computational distortion, thus enlarging the optimality gap by the order of $\frac{1}{\rho^2}$ with a decreasing level of channel estimation accuracy $\rho$.

\item Based on the derived optimality gap, we formulate an inclusion probability optimization problem for effective device sampling in wirless FL. The optimization problems for both digital and analog cases are optimally solved by checking the Karush-Kuhn-Tucker (KKT) conditions and exploiting the Dinkelbach algorithm, respectively. Through the examination of optimal solutions, we identify the essential differences underlying the device sampling optimization for digital and analog transmissions.

%\item  It is found crucial to optimize the inclusion probability for the digital transmission, in which  a 12~dB gain in terms of the transmit power is achieved.
\end{itemize}

Extensive numerical simulations are conducted to validate the derived analytical observations and the proposed sampling optimization. In particular, it is observed that the digital scheme has better power utilization, while the analog transmission is more spectrum-efficient.
 
The rest of this paper is organized as follows. In Section~\Rmnum{2}, we describe the typical FL algorithm, with details of digital and analog transmissions, and propose a fair comparison framework. Section \Rmnum{3} provides some preliminaries for the convergence analysis. In Section~\Rmnum{4}, we analyze the convergence performance under different transmission schemes and offer engineering insights. Then, in Section~\Rmnum{5}, we optimize the inclusion probabilities for both the digital and analog schemes. Simulation results and conclusions are given in Sections \Rmnum{6} and \Rmnum{7}, respectively.

\emph{Notation:} Boldface lowercase (uppercase) letters represent vectors (matrices). The set of all real numbers is denoted by $\mathbb{R}$. Superscripts $(\cdot)^T$ and $(\cdot)^\ast$ stand for the transpose and conjugate operations, respectively. The operator $\Re(\cdot)$ returns the real part of the input complex number. The operator $\left\|\cdot\right\|$ takes the Euclidean norm of vectors. A circularly symmetric complex Gaussian distribution is denoted by ${\cal {CN}}$, and $\mathbb{E}\{\cdot\}$ is the expectation operation.

\section{System Model and Communication Framework}

We consider a typical wireless FL system as shown in Fig.~\ref{fig1}, where $K$ distributed devices are coordinated by a central PS to perform FL. The training procedure and transmission model are elaborated in the sequel.

\begin{figure}[!t]
\centering
\includegraphics[width=2.8in]{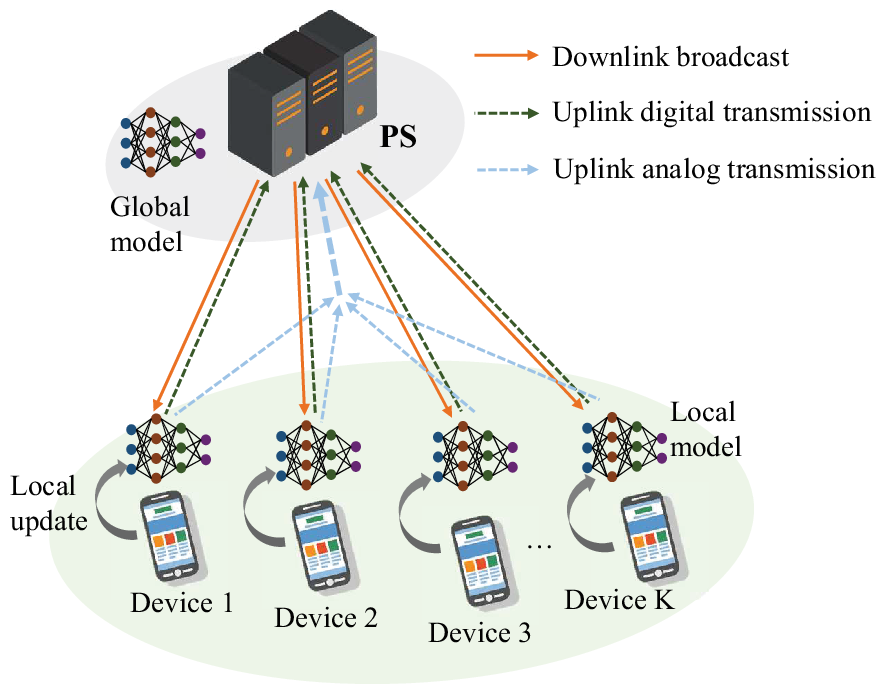}
\caption{ The architecture of a typical wireless FL system.}
\label{fig1}
\vspace{-0.35cm}
\end{figure}

\subsection{Federated Learning Model}
In FL, the distributed devices collaboratively train a shared machine learning model via local computing based on their local datasets and information exchange with the PS.
Let $\mathcal{D}_k$ denote the local dataset owned by the $k$-th device, which contains $D_k=\vert \mathcal{D}_k \vert$ training samples. The goal of the FL algorithm is to find the optimal $d$-dimensional model parameter vector, denoted by $\mathbf{w}^*\in \mathbb{R}^{d\times 1}$, to minimize the global loss function $F(\mathbf{w})$, i.e., 
\begin{align}\label{eq1}
\mathbf{w}^* &=\arg \min_{\mathbf{w}} F(\mathbf{w})=\arg \min_{\mathbf{w}}  \frac{1}{D} \sum_{k=1}^K D_k F_k(\mathbf{w})\nonumber \\
&=\arg \min_{\mathbf{w}} \sum_{k=1}^K \alpha_k F_k(\mathbf{w}),
\end{align}
where $D\triangleq \sum_{k=1}^K D_k$, $\alpha_k \triangleq \frac{D_k}{D}$ represents the aggregation weight for the $k$-th user, and $F_k(\mathbf{w})$ is the local loss function at device $k$ defined as
\begin{align}
F_k (\mathbf{w})= \frac{1}{D_k} \sum_{\mathbf{u}\in \mathcal{D}_k}	\mathcal{L}(\mathbf{w},\mathbf{u}),
\end{align}
where $\mathbf{u}$ denotes a training sample selected from $\mathcal{D}_k$, and $\mathcal{L}(\mathbf{w},\mathbf{u})$ represents the sample-wise loss function with respect to $\mathbf{u}$. Due to the heterogeneity of the system, we note that local datasets at distinct devices are usually non-independent and non-identically distributed (non-IID), and the optimal model parameters in (\ref{eq1}) are not necessarily the optimal for local datasets. Let $\mathbf{w}_k^*$ denote the locally optimal model at device $k$, i.e., $\mathbf{w}_k^* =\arg \min_{\mathbf{w}} F_k(\mathbf{w})$. It is  usually different from the globally optimal $\mathbf{w}^*$ unless the local dataset $\mathcal{D}_k$  experiences the same distribution as the whole data population.

To effectively handle the optimization problem in (\ref{eq1}), an FL algorithm performs the 
model training in an iterative manner. Specifically, the $m$-th round of the FL algorithm consists of the following steps.
\begin{itemize}
\item [1)]\emph{Model Broadcasting}: The PS broadcasts the latest global model $\mathbf{w}_m$ to al devices. 
    \item [2)]\emph{Local Computing}: After receiving $\mathbf{w}_m$, each device exploits its local dataset to compute the local gradient as
\begin{align}\label{eq3}
\mathbf{g}_m^k\triangleq \nabla F_k(\mathbf{w}_{m})= \frac{1}{D_k} \sum_{\mathbf{u}\in \mathcal{D}_k}	\nabla \mathcal{L}(\mathbf{w}_m,\mathbf{u}), \enspace \forall k.
\end{align}
\item [3)]\emph{Local Update Uploading}: Each device reports its local gradient to the PS.
\item [4)]\emph{Model Aggregation}: Upon receiving all local gradients, the PS updates the global model according to
\begin{align}
\mathbf{w}_{m+1}=\mathbf{w}_m-\eta \mathbf{g}_m,
\end{align}
where $\eta$ is the learning rate and $\mathbf{g}_m$ is given by
\begin{align}\label{gradient}
\mathbf{g}_m \triangleq \sum_{k=1}^K \alpha_k \mathbf{g}_m^k.
\end{align}
\end{itemize}
The above steps iterate until a convergence condition is met.

Considering the potentially massive number of devices and limited resources in practice, only a subset of devices can participate in each round of the training. Let $\mathcal{S}_m$ denote the set of activated devices selected in the $m$-th communication round and $N=\vert \mathcal{S}_m\vert$ be the number of participating devices per round. 
Due to  imbalanced dataset sizes and data heterogeneity, we assume that the PS performs non-uniform device sampling without replacement to select the participating devices per round. Specifically, the devices are randomly selected one by one from the remaining unselected device set. Once the number of selected devices reaches $N$, the sampling process terminates. Denote the inclusion probability of the device $k$ as $r_k$, which represents the probability of device $k$ being sampled per round and satisfies $r_k\leq 1$, $\forall k$, and $\sum_{k=1}^K r_k=N$. Due to the non-IID nature of the data, misaligned inclusion probability may bias the global model away from the local optimum, thereby decelerating the convergence and causing performance loss. Hence, in the following sections, we focus on the performance evaluation under fixed inclusion probabilities and characterize the impact of device sampling for  wireless FL.

Also, in wireless FL, the parameter transmission in Steps 1) and 3) relies on wireless communication between the PS and devices, which comes with additional imperfection in the model training procedure. Considering a sufficient power budget at the PS, the downlink transmission is usually assumed error-free \cite{wfl1}. Otherwise, for uplink transmission with limited communication resources, additional errors are inevitable. Efficient transmission and resource allocation schemes need to be designed to alleviate this impact of wireless environment. 

\subsection{Uplink Transmission Method}
We rely on the wireless uplink transmission to provide an estimation of the actual gradient in (\ref{gradient}). Assume that the total uplink bandwidth $B$ can be divided into up to $M$ subbands, which supports orthogonal access for $M$ devices. Without loss of generality, a frequency non-selective block fading channel model is adopted, where the wireless channels remain unchanged within a communication round. Let $\bar{h}_k =d_k^{-\frac{\alpha}{2}}h_k$  be the channel between the $k$-th device and the PS, where $d_k$ denotes the distance between the PS and device $k$, $\alpha$ represents the large-scale path loss exponent, and $h_k$ represents the small-scale fading of the channel. Assume that the channels are independent Rayleigh fadings, i.e., $h_k\sim \mathcal{CN}(0,1)$.
In practice, perfect estimation of the small-scale fading of the channel is usually not available. Let $\hat{h}_k$ denote the estimated channel at device $k$. Then, we model the CSI imperfection of the small-scale fading as 
\begin{align}
h_k = \rho \hat{h}_k + \sqrt{1-\rho^2}v_k,
\end{align}
where $\rho\in(0,1]$ is the correlation coefficient between $h_k$ and $\hat{h}_k$ to reflect the level of channel estimation accuracy, and $v_k\sim \mathcal{CN}(0,1)$ is the channel estimation error independent of $\hat{h}_k$. In the following, we introduce two typical uplink transmission schemes, i.e., digital and analog transmissions.

\subsubsection{Digital Transmission Model}
In the digital transmission, the $N$ selected devices first quantize their local updates into a finite number of  $b$ bits and then simultaneously transmit the quantized local updates to the PS. Specifically, we assume that the local update $\mathbf{g}_m^k$ is quantized by the stochastic quantization method in \cite{quan2}. Denote the maximum and the minimum values of the modulus among all parameters in $\mathbf{g}_m^k$ by $g_{m,\max}^k$ and $g_{m,\min}^k$, respectively. Then, the interval $[g_{m,\min}^k, g_{m,\max}^k]$ is divided evenly into $2^b-1$ quantization intervals. The uniformly distributed knobs are denoted by $\tau_i=g_{m,\min}^k+\frac{g_{m,\max}^k-g_{m,\min}^k}{2^b-1}i$ for $i=0,\cdots,2^b-1$. Given $\vert x \vert \in [\tau_i,\tau_{i+1})$, the quantization function $\mathcal{Q}(x)$ is expressed as
\begin{align}\label{quan}
\mathcal{Q}(x)=\left\{ \begin{array}{cc}
\mathrm{sign}(x)\tau_i &\mathrm{w.p.} \enspace \frac{\tau_{i+1}-\vert x\vert}{\tau_{i+1}-\tau_i},\\
\mathrm{sign}(x)\tau_{i+1} &\mathrm{w.p.} \enspace \frac{\vert x\vert-\tau_i}{\tau_{i+1}-\tau_i},\\
\end{array} \right.
\end{align}
where $\mathrm{sign}(\cdot)$ represents the signum function and ``w.p." represents ``with probability." Exploiting the quantization function in (\ref{quan}), the local update $\mathbf{g}_m^k$ is quantized as $\mathcal{Q}\left(\mathbf{g}_m^k \right)\triangleq \left[\mathcal{Q}\left(g_{m,1}^k \right),\cdots,\mathcal{Q}\left(g_{m,d}^k \right)\right]^T$, which is transmitted to the PS. Note that the exact value of $g_{m,\max}^k$ and $g_{m,\min}^k$ need to be transmitted to the PS with sufficient precision to support effective recovery. Hence, the total number of bits needed for transmitting amounts to
\begin{align}
b_{\mathrm{total}}=d(b+1)+q,
\end{align}
where $q$ is the number of bits used to represent  $g_{m,\max}^k$ and $g_{m,\min}^k$, and the additional one bit is the sign bit.

During the uplink FL parameter report, transmission errors are inevitable due to the channel dynamics and limited communication resources. Without loss of generality, we adopt the typical FDMA technique as an example. Assume that $M\geq N$ and hence each device can occupy different subbands equally to avoid interference with each other.\footnote{We generally assume orthogonal access between different devices and refrain from specifying the particular multiple access design. Hence, the following analysis can be safely extended to orthogonal access scenarios like TDMA and orthogonal frequency division multiple access (OFDMA).}
Then, the channel capacity of device $k$ can be evaluated as
\begin{align}
C_k=B_k \log_2\left( 1+\frac{P_k\vert \bar{h}_k\vert^2}{B_k N_0} \right),
\end{align}
where $B_k$ is the bandwidth allocated to device $k$ and it is set to $\frac{B}{N}$, $P_k$ is the transmit power at device $k$, and $N_0$ is the noise power density. 

The transmission delay under the digital transmission is primarily influenced by stragglers, which refer to devices with poor channel conditions. To avoid the uncontrolled severe delay brought by stragglers, we assume that all the devices transmit the local updates at a fixed rate rather than a dynamic one based on instantaneous signal-to-noise ratio (SNR) levels. Hence, the use of a fixed-rate transmission acts as a truncation mechanism for stragglers. Additionally, for devices experiencing favorable channel conditions, it is more beneficial to transmit at a lower rate with enhanced transmission reliability.
The target transmission rate is denoted by $R=\frac{B}{N}\log_2(1+\theta)$, where $\theta$ is
a chosen constant. According to \cite{wfl1}, the transmission is assumed error-free if the transmission rate is no larger than the channel capacity. Hence, the probability of successful transmission at device $k$ is calculated as
\begin{align}\label{eq9}
p_k=\Pr\left\{ R\leq C_k \right \}=\mathrm{exp} \left (-\frac{BN_0 \theta}{2NP_k d_k^{-\alpha}} \right).
\end{align}
At the PS, a cyclic redundancy check (CRC) mechanism is applied to check the detected data such that erroneous local updates can be excluded from the model aggregation. Finally, the obtained estimate of the desired gradient in (\ref{gradient}) is given by
\begin{align}\label{estd}
\hat{\mathbf{g}}_{m,\text{D}}= \sum_{k=1}^K \frac{\chi_k \alpha_k \xi_{k,\text{D}}}{r_k}\mathcal{Q}(\mathbf{g}_m^k),
\end{align}
where $\chi_k$ is an indicator variable for the device selection, and $\xi_{k,\text{D}}$ represents distortion brought by packet loss. 
To be concrete, $\chi_k$ is 1 if $k\in \mathcal{S}_m$ and otherwise $\chi_k$ is 0. Considering the definition of the inclusion probability, we have $\mathbb{E}\left[\chi_k\right]=r_k\leq 1$, which decreases the desired expected aggregation coefficient for unbiased gradient estimation. In order to compensate for the impact of partial participation, we multiply the coefficient $\frac{1}{r_k}$ in (\ref{estd}), such that $\frac{1}{r_k}\mathbb{E}\left[\chi_k\right]=1$.  Analogously, the distortion $\xi_{k,\text{D}}$ is characterized by the probability in (\ref{eq9}) as
\begin{align}\label{distortiond}
\xi_{k,\text{D}}=\left \{  
\begin{array}{cl}
\frac{1}{p_k} &\text{w.p.}\enspace p_k,\\
0&\text{w.p.}\enspace 1-p_k,
\end{array}
\right.
\end{align}
to ensure $\mathbb{E}\left[ \xi_{k,\text{D}}\right]=1$. With the gradient estimate in (\ref{estd}), the global model updated at the $(m+1)$-th round equals to
\begin{align}
\tilde{\mathbf{w}}_{m+1}= \tilde{\mathbf{w}}_{m}-\eta\hat{\mathbf{g}}_{m,\text{D}},
\end{align}
where $\tilde{\mathbf{w}}_{m}$ denotes the model obtained at the previous round.

\subsubsection{Analog Transmission Model}
In the analog transmission with AirComp, selected devices simultaneously upload the uncoded analog signals of local gradients to the PS by fully reusing the time-frequency resource. A weighted summation of the local updates in (\ref{gradient}) can be achieved by exploiting channel pre-equalization and the waveform superposition nature of the wireless channel. In this study, we consider that the total bandwidth is constrained for fair comparison and all subbands are utilized for the transmission of identical parameters.  This is because the uncoded nature of the analog transmission diminishes its robustness, rendering it more vulnerable to interference and even the malicious attacks.\footnote{The derived results directly extend to the case of dividing bandwidth for distinct parameter transmission in broadband scenarios\cite{zhu}.
}
Specifically, the received signal at the PS is expressed as 

\begin{align}\label{eq13}
\mathbf{y}=\sum_{k=1}^K \chi_k\bar{h}_k  \beta_k \mathbf{g}_m^k +\mathbf{z}_m,
\end{align} 
where $\beta_k$ is the pre-processing factor at device $k$, and  $\mathbf{z}_m$ is additive white Gaussian noise following $\mathcal{CN}(\mathbf{0},BN_0\mathbf{I})$. To accurately estimate the desired gradient in (\ref{gradient}), the pre-processing factor $\beta_k$ should be adapted to the channel coefficient $\bar{h}_k$. Unlike the digital transmission, CSI is needed at the transmitter for the analog transmission. 
Channel pre-equalization is performed based on the CSI available at each device. For simplicity, we adopt the typical truncated channel inversion scheme to combat deep fades\cite{zhu}. It is expressed as
\begin{align}\label{eq15}
    \beta_k=\left \{  
\begin{array}{cl}
\frac{\zeta \lambda \alpha_k d_k^{\frac{\alpha}{2}} \hat{h}_k^*}{ r_k \vert \hat{h}_k \vert^2} & \vert \hat{h}_k \vert^2 \geq \gamma_{\mathrm{th}},\\
0&\vert \hat{h}_k \vert^2 <\gamma_{\mathrm{th}},
\end{array}
\right.
\end{align}
where $\gamma_{\mathrm{th}}$ is a predetermined power-cutoff threshold, $\zeta$ is a scaling factor for ensuring the transmit power constraint, and compensation coefficient $\lambda$ is selected to alleviate the impact of imperfect CSI \cite{imperfect}. Through the pre-processing in (\ref{eq15}), we aim to eliminate the influence of the uneven channel fading $\bar{h}_k$, and the inclusion probability $p_k$, thereby ensuring the unbiased gradient estimation.

At the receiver, the PS scales the real part of $\mathbf{y}$ in (\ref{eq13}) with $\frac{1}{\zeta}$ and obtain an estimate of the actual gradient in (\ref{gradient}). It yields
\begin{align} \label{ghata}
\hat{\mathbf{g}}_{m,\text{A}}=\sum_{k=1}^K \frac{\chi_k \alpha_k \xi_{k,\text{A}} }{r_k } \mathbf{g}_m^k+\bar{\mathbf{z}}_m,
\end{align}
where $\bar{\mathbf{z}}_m\triangleq \frac{\Re\{\mathbf{z}_m\}}{\zeta}$ is the equivalent noise, and $\xi_{k,\text{A}}$ denotes the distortion brought by the analog transmission with imperfect CSI. It follows
\begin{align}
\xi_{k,\text{A}} =\left \{  
\begin{array}{cl}
\frac{\lambda\Re\{h_k^* \hat{h}_k\}}{ \vert \hat{h}_k \vert^2} &  \text{w.p.} \enspace \mathrm{e}^{-\gamma_{\mathrm{th}}},\\
0&\text{w.p.}\enspace 1-\mathrm{e}^{-\gamma_{\mathrm{th}}}.
\end{array}
\right.
\end{align}
Similarly, the global model at the $(m+1)$-th round under the analog transmission is updated as
\begin{align}
\tilde{\mathbf{w}}_{m+1}= \tilde{\mathbf{w}}_{m}-\eta\hat{\mathbf{g}}_{m,\text{A}}.
\end{align}

\subsection{A Unified Framework for Wireless FL Comparison}
%The constraint $\mathrm{C}_1$ requires unbiasedness of the gradient estimation, which plays an essential role in achieving satisfactory performance with the non-IID datasets.
To minimize the optimality gap brought by imperfect uplink transmission, the overall FL task-oriented optimization over the wireless networks can be formulated as
\begin{align} \label{p1}
    \text{minimize} \quad&\mathbb{E}\left[F(\mathbf{w}_{m+1})\right]-F(\mathbf{w}^*)\nonumber \\
    \text{subject to}\quad  & \mathrm{C}_1:T\leq T_{\max},\nonumber \\
    & \mathrm{C}_2:P_k\leq P_{\max},\enspace \forall k,
\end{align}
where the expectation is taken over channel dynamics, $T$ represents uplink transmission delay per round, $T_{\max}$ and $P_{\max}$ denotes the maximum transmission delay target and the transmit power unit, respectively. 
Constraint $\mathrm{C}_1$ and $\mathrm{C}_2$ respectively represent the maximum transmission delay and maximum transmit power constraint in practice. Apart from the maximum power budget, another typical transmit power constraint is the average power budget \cite{zhu}, i.e., 
\begin{align}
    \bar{\mathrm{C}}_2:\mathbb{E}[P_k]\leq P_{\mathrm{ave}},\enspace \forall k,
\end{align}
where $P_{\mathrm{ave}}$ denotes the average power budget and limits the energy consumption during the uplink transmission process.

\begin{table}[!t]   
\renewcommand\arraystretch{1.2}
\begin{center}   
\caption{Main Differences Between the Two Paradigms}  
\vspace{-0.3cm}
\label{table:1} 
\begin{tabular}{|c|c|c|c|}   
\hline 
Paradigms & Gradient estimation & Transmission delay & Power budget\\
\hline 
Digital & (\ref{estd}) & (\ref{delayd}) & (\ref{powerd})\\
\hline
Analog & (\ref{ghata}) &  (\ref{delaya}) & (\ref{pmax}), (\ref{pave})\\
\hline
\end{tabular}   
\end{center}   
\vspace{-0.5cm}
\end{table}

For fair comparison between the two transmission paradigms, we measure the achievable objective value of the problem in (\ref{p1}) under the same transmission delay target and transmit power budget. Specific constraints for the two transmission paradigms are listed as follows, summarized in Table \ref{table:1}. 

For digital transmission, the transmission delay per communication round is calculated as
\begin{align}\label{delayd}
T_{\text{D}}=\frac{b_{\mathrm{total}}}{R}= \frac{Nd(b+1)}{B\log_2(1+\theta)},
\end{align}
where the evaluation holds with a sufficiently large model size $d$. Hence, constraint $\mathrm{C}_1$ is reformulated as 
\begin{align} \label{eq22}
    \frac{Nd(b+1)}{B\log_2(1+\theta)}\leq T_{\max} \Rightarrow \theta\geq 2^{\frac{Nd(b+1)}{BT_{\max}}}-1.
\end{align}
For constraint $\mathrm{C}_2$, due to its interference-free characteristic, full power transmission is optimal and hence the constraint is reformulated by
\begin{align}\label{powerd}
    P_k=P_{\max}, \enspace \forall k.
\end{align}
Also, with the average transmit power budget, we assume invariant transmit power over different communication rounds and have $P_k=P_{\mathrm{ave}}, \enspace \forall k$.

For analog transmission, according to \cite[Eq. (16)]{deploy}, the per-round delay follows
\begin{align} \label{delaya}
T_{\text{A}}=\frac{dM}{B},
\end{align}
which is a constant. For feasibility, we assume that the target $T_{\max}$ cannot be smaller than $T_{\text{A}}$. The maximum power constraint $\mathrm{C}_2$ is rewritten as
\begin{align}\label{pmax}
\max_{m,k}\left \{ \left \Vert \beta_k \mathbf{g}_m^k \right \Vert^2 \right \} \leq P_{\max},
\end{align}
for the analog transmission. Unlike the digital transmission, it is impossible to fully utilize the maximum power in analog transmission due to the need for channel pre-equalization.
On the other hand, the average power constraint $\bar{\mathrm{C}}_2$ follows
\begin{align}\label{pave}
\mathbb{E}\left [ \left \Vert \beta_k \mathbf{g}_m^k \right \Vert^2 \right ] \leq P_{\mathrm{ave}},
\end{align}
where the expectation is taken over the wireless channel dynamics and different communication rounds.

\section{Preliminaries}

To pave the way for performance analysis, this section provides necessary assumptions and lemmas about the learning algorithms and the transmission paradigms, which will be useful in the next section.

\subsection{Assumptions for Learning Algorithms}
To begin with, we make several common assumptions on the loss functions, which are widely used in FL studies like \cite{wfl1,zhu2,importance}.

\emph{Assumption 1}: The local loss functions $F_k(\cdot)$ are $\mu$-strongly convex for all devices, that is
\begin{align}\label{convex}
F_k(\mathbf{w})\geq F_k(\mathbf{v}) +\nabla F_k(\mathbf{v})^T (\mathbf{w}-\mathbf{v})+\frac{\mu}{2}\Vert \mathbf{w}-\mathbf{v}\Vert^2.
\end{align}

\emph{Assumption 2}: The local loss functions $F_k(\cdot)$ are differentiable and have $L$-Lipschitz gradients, which follows
\begin{align}
\Vert \nabla F_k(\mathbf{w})-  \nabla F_k(\mathbf{v})\Vert\leq L \Vert \mathbf{w}-\mathbf{v}\Vert,
\end{align}
and it is equivalent to
\begin{align}\label{smooth}
F_k(\mathbf{w})\leq  F_k(\mathbf{v}) +\nabla F_k(\mathbf{v})^T (\mathbf{w}-\mathbf{v})+\frac{L}{2}\Vert \mathbf{w}-\mathbf{v}\Vert^2.
\end{align}

\emph{Assumption 3}: In most practical applications, it is safe to assume that the sample-wise gradient is always upper bounded by a finite constant $\gamma$, i.e., 
\begin{align}
\left \Vert \nabla \mathcal{L}(\mathbf{w},\mathbf{u}) \right \Vert \leq \gamma.
\end{align}

\emph{Assumption 4}: The distance between the locally optimal model, $\mathbf{w}_k^*$, and the globally optimal model, $\mathbf{w}^*$, is uniformly bounded by  a finite constant $\delta$, i.e., 
\begin{align}
\Vert \mathbf{w}_k^*- \mathbf{w}^* \Vert\leq \delta.
\end{align}

%Based on \emph{Assumption 1} and \emph{Assumption 2}, it is easy to verify that the global loss function $F(\cdot)$ is also $\mu$-strongly convex and $L$-smooth and we omit the proof for simplicity.

%Moreover, to capture the degree of non-IID, we evaluate the distance between the locally optimal models and the globally optimal one in the following assumption, which directly reflects the heterogeneity of the data distribution \cite{importance}.

\subsection{Preliminary Lemmas}

We present lemmas regarding the strong convexity and Lipschitz smooth properties of the global loss function. 

\begin{lemma}
    With $\mu$-strongly convex and $L$-smooth local loss functions, the global loss function $F(\cdot)$ is also $\mu$-strongly convex and $L$-smooth.
\end{lemma}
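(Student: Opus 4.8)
The plan is to prove both claims—strong convexity and Lipschitz smoothness of $F$—directly from the corresponding properties of the local functions $F_k$, exploiting the fact that $F$ is a convex combination $F(\mathbf{w})=\sum_{k=1}^K \alpha_k F_k(\mathbf{w})$ with weights $\alpha_k=D_k/D\geq 0$ satisfying $\sum_{k=1}^K \alpha_k = 1$. Since both defining inequalities (\ref{convex}) and (\ref{smooth}) are preserved under nonnegative linear combinations, the strategy is essentially to multiply each local inequality by $\alpha_k$ and sum over $k$, then recognize that the resulting quantities reassemble into the global function and its gradient.

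First I would establish strong convexity. For each device $k$, Assumption~1 gives
\begin{align}
F_k(\mathbf{w})\geq F_k(\mathbf{v}) +\nabla F_k(\mathbf{v})^T (\mathbf{w}-\mathbf{v})+\frac{\mu}{2}\Vert \mathbf{w}-\mathbf{v}\Vert^2.
\end{align}
Multiplying by $\alpha_k$ and summing over $k$, the left side becomes $F(\mathbf{w})$ by definition, the first term on the right becomes $F(\mathbf{v})$, and since the gradient is linear the second term becomes $\left(\sum_k \alpha_k \nabla F_k(\mathbf{v})\right)^T(\mathbf{w}-\mathbf{v})=\nabla F(\mathbf{v})^T(\mathbf{w}-\mathbf{v})$. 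Crucially, the quadratic remainder $\frac{\mu}{2}\Vert \mathbf{w}-\mathbf{v}\Vert^2$ carries no index $k$, so summing its coefficients yields $\sum_k \alpha_k \cdot \frac{\mu}{2}\Vert \mathbf{w}-\mathbf{v}\Vert^2 = \frac{\mu}{2}\Vert \mathbf{w}-\mathbf{v}\Vert^2$ precisely because $\sum_k \alpha_k = 1$. This reproduces inequality (\ref{convex}) for $F$, establishing $\mu$-strong convexity.

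The $L$-smoothness argument runs in parallel. I would work from the equivalent quadratic-upper-bound form (\ref{smooth}) rather than the gradient-Lipschitz form, since it is additive in exactly the same way: multiplying (\ref{smooth}) by $\alpha_k$, summing, and using $\sum_k\alpha_k=1$ gives the global bound
\begin{align}
F(\mathbf{w})\leq  F(\mathbf{v}) +\nabla F(\mathbf{v})^T (\mathbf{w}-\mathbf{v})+\frac{L}{2}\Vert \mathbf{w}-\mathbf{v}\Vert^2,
\end{align}
which is equivalent to $F$ having $L$-Lipschitz gradients. Alternatively one can work directly with the gradient-Lipschitz definition and use the triangle inequality together with $\sum_k\alpha_k=1$.

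I do not anticipate a genuine obstacle here; this is a routine verification. The only point requiring any care is the bookkeeping of the normalization $\sum_k \alpha_k = 1$, which is what guarantees the curvature constants $\mu$ and $L$ are preserved exactly rather than being scaled—if the weights did not sum to one, the resulting constants would be multiplied by $\sum_k\alpha_k$. Since $\alpha_k = D_k/D$ with $D=\sum_k D_k$ gives $\sum_k \alpha_k = 1$ by construction, the constants transfer unchanged, and the proof is essentially a two-line summation for each property.
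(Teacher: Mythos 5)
Your proof is correct and takes essentially the same route as the paper, which simply observes that convex combinations preserve the inequalities (\ref{convex}) and (\ref{smooth}); you have filled in the term-by-term summation that the paper leaves as "easily verified." Your explicit remark that the normalization $\sum_k \alpha_k = 1$ is what keeps the constants $\mu$ and $L$ unscaled is a worthwhile precision, since the paper's phrase ``any linear combination'' is, strictly speaking, only true for convex combinations.
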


\begin{proof}
    Recalling the definition of $F(\cdot)$ in (\ref{eq1}), with \emph{Assumptions 1-2}, it is easily verified that any linear combination of $\mu$-strongly convex and $L$-smooth local loss functions also satisfies (\ref{convex}) and (\ref{smooth}). The proof completes. \hfill $\square$
\end{proof}

We then provide the following lemma regarding the imperfection in digital and analog transmission paradigms.

\begin{lemma}
    Under the stochastic quantization and the proposed digital aggregation in (\ref{estd}), $\hat{\mathbf{g}}_{m,\text{D}}$ is an unbiased estimate of the actual gradient in (\ref{gradient}). For the considered analog paradigm in (\ref{ghata}), by choosing $\lambda = \frac{e^{\gamma_{\mathrm{th}}}}{\rho}$, the gradient estimate $\hat{\mathbf{g}}_{m,\text{A}}$ is also unbiased.
\end{lemma}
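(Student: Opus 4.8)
The plan is to establish unbiasedness for each paradigm separately by computing the expectation of the gradient estimate and showing it equals the target gradient $\mathbf{g}_m = \sum_{k=1}^K \alpha_k \mathbf{g}_m^k$. The expectation must be taken over three independent sources of randomness: the device sampling (captured by $\chi_k$), the transmission distortion ($\xi_{k,\text{D}}$ or $\xi_{k,\text{A}}$), and, in the digital case, the stochastic quantization $\mathcal{Q}(\cdot)$. The key structural fact I would exploit is that these randomness sources are mutually independent, so the expectation factorizes over the product of the relevant terms in each summand.

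For the digital scheme, I would start from the estimate in \eqref{estd} and compute
\begin{align}
\mathbb{E}\left[\hat{\mathbf{g}}_{m,\text{D}}\right] = \sum_{k=1}^K \frac{\alpha_k}{r_k}\,\mathbb{E}\left[\chi_k\right]\,\mathbb{E}\left[\xi_{k,\text{D}}\right]\,\mathbb{E}\left[\mathcal{Q}(\mathbf{g}_m^k)\right].
\end{align}
The three ingredients are each designed to be unbiased: by the definition of the inclusion probability we have $\mathbb{E}[\chi_k]=r_k$, so the factor $\frac{1}{r_k}\mathbb{E}[\chi_k]=1$ exactly compensates for partial participation; by construction in \eqref{distortiond} we have $\mathbb{E}[\xi_{k,\text{D}}]=1$; and the stochastic quantizer in \eqref{quan} is unbiased, i.e.\ $\mathbb{E}[\mathcal{Q}(x)]=x$ coordinate-wise, which follows from the two-point rounding probabilities being chosen so that the expected value lands on $x$. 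Substituting these yields $\mathbb{E}[\hat{\mathbf{g}}_{m,\text{D}}]=\sum_{k=1}^K \alpha_k \mathbf{g}_m^k=\mathbf{g}_m$, establishing the digital claim.

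For the analog scheme I would argue analogously from \eqref{ghata}, noting first that the additive noise $\bar{\mathbf{z}}_m$ is zero-mean and so contributes nothing in expectation. The remaining work is to verify $\mathbb{E}[\xi_{k,\text{A}}]=1$ under the stated choice $\lambda=\frac{e^{\gamma_{\mathrm{th}}}}{\rho}$. Here the distortion carries two layers: the truncation indicator (nonzero with probability $e^{-\gamma_{\mathrm{th}}}$) and, conditioned on transmission, the random gain $\frac{\lambda\,\Re\{h_k^*\hat{h}_k\}}{|\hat{h}_k|^2}$ arising from imperfect CSI. The essential computation is the conditional expectation of $\Re\{h_k^*\hat{h}_k\}/|\hat{h}_k|^2$ over the channel model $h_k=\rho\hat{h}_k+\sqrt{1-\rho^2}\,v_k$: since $v_k$ is independent of $\hat{h}_k$ and zero-mean, only the $\rho\hat{h}_k$ term survives, giving $\Re\{\rho|\hat{h}_k|^2\}/|\hat{h}_k|^2=\rho$ after the cross term vanishes. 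Multiplying the truncation probability $e^{-\gamma_{\mathrm{th}}}$ by this gain $\lambda\rho$ and substituting $\lambda=e^{\gamma_{\mathrm{th}}}/\rho$ gives exactly $1$, so $\mathbb{E}[\xi_{k,\text{A}}]=1$ and hence $\mathbb{E}[\hat{\mathbf{g}}_{m,\text{A}}]=\mathbf{g}_m$.

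The main obstacle is the analog CSI-averaging step: one must be careful about what the expectation is conditioned on and treat $\hat{h}_k$ appropriately, since the pre-processing factor $\beta_k$ and the truncation threshold both depend on the \emph{estimated} channel $\hat{h}_k$ rather than the true $h_k$. The cleanest route is to condition on $\hat{h}_k$, take the inner expectation over the error $v_k$ (which kills the $\sqrt{1-\rho^2}\,v_k$ contribution by independence and zero mean), and only then average over the magnitude-truncation event. The digital side is essentially bookkeeping once the three unbiasedness facts are in hand; the subtlety there is merely invoking independence of the quantization randomness from the sampling and packet-loss events so that the expectation genuinely factorizes.
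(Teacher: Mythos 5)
Your proposal is correct and follows essentially the same route as the paper: factor the expectation of each summand over the independent sources of randomness (sampling $\chi_k$, distortion $\xi_{k,\text{D}}$ or $\xi_{k,\text{A}}$, quantization/noise) and check that each factor is unbiased, with the noise term $\bar{\mathbf{z}}_m$ vanishing in mean. The only difference is one of self-containment: where the paper cites external lemmas for $\mathbb{E}[\mathcal{Q}(\mathbf{g}_m^k)]=\mathbf{g}_m^k$ and $\mathbb{E}[\xi_{k,\text{A}}]=1$, you derive them inline (the quantizer's two-point rounding probabilities, and the conditional average over $v_k$ given $\hat{h}_k$ yielding $\lambda\rho\, e^{-\gamma_{\mathrm{th}}}=1$), which matches the cited results.
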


\begin{proof}
    Please refer to Appendix \ref{lemma2}. \hfill $\square$
\end{proof}

Although both the digital and analog transmissions achieve unbiased gradient estimations, there are fundamental differences in the distortion between the two paradigms. For the digital transmission, the distortion mainly lies in the gradients themselves, i.e., gradient quantization errors. On the other hand, due to the integration of communication and computation in AirComp, the analog transmission additionally suffers from distortion in coefficient aggregation, i.e., computation error, which is due to the CSI imperfection. This essential difference further discriminates the performances of digital and analog transmissions, which are elaborated in the next section.

\section{Comparison with Convergence Analysis}
In this section, we analyze the convergence performance under the digital and analog transmissions with the practical constraints for wireless FL. Based on the derived results, we further conduct quantitative comparisons between the two paradigms from various perspectives of view.

\subsection{Convergence under the Maximum Power Budget}

We characterize the convergence performance under different transmission paradigms in the following theorems.

\begin{figure*}[t]
\begin{align}
\mathbb{E}&\left[ F(\tilde{\mathbf{w}}_{m+1})\right]-F(\mathbf{w}^*) \leq \frac{L}{2}\left( 1-\eta \mu +2\eta^2 L^2g_{\text{D}}(\mathbf{r},b)\right)^{m+1} \mathbb{E}\left[\left \Vert \tilde{\mathbf{w}}_{0} -\mathbf{w}^*\right \Vert^2 \right] +\frac{\eta (L\phi(b) +2L^3\delta^2) g_{\text{D}}(\mathbf{r},b)}{2\mu-4\eta L^2 g_{\text{D}}(\mathbf{r},b)},\label{digital}\\
\mathbb{E}&\left[ F(\tilde{\mathbf{w}}_{m+1})\right]-F(\mathbf{w}^*)\leq \frac{L}{2}\left( 1-\eta \mu +2\eta^2 L^2g_{\text{A}}(\mathbf{r},\gamma_{\mathrm{th}})\right)^{m+1} \mathbb{E}\left[\left \Vert \tilde{\mathbf{w}}_{0} -\mathbf{w}^*\right \Vert^2 \right] +\frac{\eta \left(L\varphi(\mathbf{r},\gamma_{\mathrm{th}})+2L^3 \delta^2 g_{\text{A}}(\mathbf{r},\gamma_{\mathrm{th}}) \right)}{2\mu-4\eta L^2 g_{\text{A}}(\mathbf{r},\gamma_{\mathrm{th}})}. \label{analog}
\end{align}
\hrulefill
\vspace{-0.4cm}
\end{figure*}

\begin{theorem}[Digital Transmission]
For a fixed learning rate satisfying $\eta \leq \frac{\mu}{2L^2 g_{\text{D}}(\mathbf{r},b)}$, the optimality gap of the distributed gradient update in the $(m+1)$-th iteration of the digital transmission is equal to (\ref{digital}) at the top of the next page,
where $\phi(b)$ is a constant defined in Appendix \ref{theo1} regarding the quantization errors, $\mathbf{r}\triangleq [r_1,\cdots,r_K]^T$, and $g_{\text{D}}(\mathbf{r},b)\triangleq \sum_{k=1}^K \frac{\alpha_k}{p_k r_k}$. 
\end{theorem}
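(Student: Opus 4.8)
The plan is to establish the recursion by tracking the optimality gap across a single round and then unrolling it. First I would invoke $L$-smoothness of the global loss function (Lemma~1 together with Assumption~2) applied to $\tilde{\mathbf{w}}_{m+1}=\tilde{\mathbf{w}}_m-\eta\hat{\mathbf{g}}_{m,\text{D}}$, which yields an upper bound of the form
\begin{align}
F(\tilde{\mathbf{w}}_{m+1})\leq F(\tilde{\mathbf{w}}_m)-\eta\nabla F(\tilde{\mathbf{w}}_m)^T\hat{\mathbf{g}}_{m,\text{D}}+\tfrac{L\eta^2}{2}\Vert\hat{\mathbf{g}}_{m,\text{D}}\Vert^2.\nonumber
\end{align}
Taking expectation over the randomness of device sampling, quantization, and packet loss, I would use the unbiasedness from Lemma~2, i.e. $\mathbb{E}[\hat{\mathbf{g}}_{m,\text{D}}]=\mathbf{g}_m=\nabla F(\tilde{\mathbf{w}}_m)$, so that the cross term collapses to $-\eta\Vert\nabla F(\tilde{\mathbf{w}}_m)\Vert^2$. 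The crux is then bounding the second moment $\mathbb{E}[\Vert\hat{\mathbf{g}}_{m,\text{D}}\Vert^2]$.

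Next I would decompose the second moment as $\Vert\mathbb{E}[\hat{\mathbf{g}}_{m,\text{D}}]\Vert^2$ plus a variance term. The variance comes from three independent sources: the stochastic quantization noise (controlled by a constant $\phi(b)$ encoding the per-coordinate quantization variance under Assumption~3's bounded-gradient guarantee), the Bernoulli packet-loss distortion $\xi_{k,\text{D}}$ with $\mathbb{E}[\xi_{k,\text{D}}^2]=1/p_k$, and the without-replacement sampling indicator $\chi_k$ with $\mathbb{E}[\chi_k^2]=r_k$. Combining the $1/p_k$ and $1/r_k$ factors against the $\alpha_k$ weights is precisely what produces the aggregate quantity $g_{\text{D}}(\mathbf{r},b)=\sum_k\frac{\alpha_k}{p_k r_k}$; I expect the dominant variance contribution to scale like $g_{\text{D}}(\mathbf{r},b)$ times a bound on the local gradient norms. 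To close the recursion I would convert $\Vert\nabla F(\tilde{\mathbf{w}}_m)\Vert^2$ into an optimality-gap term: strong convexity (Lemma~1 with Assumption~1) gives the Polyak-type inequality $\Vert\nabla F(\tilde{\mathbf{w}}_m)\Vert^2\geq 2\mu\big(F(\tilde{\mathbf{w}}_m)-F(\mathbf{w}^*)\big)$, and Assumption~4 (the $\delta$-bound on local-versus-global optima) supplies the additive $L^3\delta^2$ term that accounts for the non-IID gradient bias entering the variance bound.

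Assembling these pieces produces a contraction of the form $\mathbb{E}[F(\tilde{\mathbf{w}}_{m+1})]-F(\mathbf{w}^*)\leq A\big(\mathbb{E}[F(\tilde{\mathbf{w}}_m)]-F(\mathbf{w}^*)\big)+C$, where the contraction factor is $A=1-\eta\mu+2\eta^2L^2 g_{\text{D}}(\mathbf{r},b)$ and $C$ collects the quantization and bias constants. The learning-rate condition $\eta\leq\frac{\mu}{2L^2g_{\text{D}}(\mathbf{r},b)}$ is exactly what guarantees $A\in[0,1)$ so the geometric series converges; unrolling from $m=0$ and summing the geometric tail gives the stated fixed point $C/(1-A)=\frac{\eta(L\phi(b)+2L^3\delta^2)g_{\text{D}}(\mathbf{r},b)}{2\mu-4\eta L^2 g_{\text{D}}(\mathbf{r},b)}$. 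The leading prefactor $\frac{L}{2}$ and the initial term $\mathbb{E}[\Vert\tilde{\mathbf{w}}_0-\mathbf{w}^*\Vert^2]$ arise from one final application of $L$-smoothness to bound $F(\tilde{\mathbf{w}}_0)-F(\mathbf{w}^*)$ by the squared distance to the optimum.

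The main obstacle I anticipate is the careful bookkeeping of the variance of $\hat{\mathbf{g}}_{m,\text{D}}$ under without-replacement sampling: because the $\chi_k$ are correlated (they sum to $N$), the cross terms $\mathbb{E}[\chi_j\chi_k]$ for $j\neq k$ do not factor as $r_j r_k$, and one must verify that these negatively-correlated cross terms either cancel favorably or are dominated so that only the diagonal $\sum_k\frac{\alpha_k^2}{r_k^2}\mathbb{E}[\chi_k^2](\cdot)$ survives in the form $g_{\text{D}}(\mathbf{r},b)$. Separating the quantization bias and variance cleanly from the sampling variance, while ensuring each expectation is taken over the correct independent source, is where the argument is most delicate; I would handle this by conditioning sequentially (quantization given sampling, then averaging over sampling) to keep the independence structure transparent.
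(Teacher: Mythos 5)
There is a genuine gap: the constants you claim simply do not come out of the route you describe. In your function--value recursion, the descent lemma's cross term gives $-\eta\mathbb{E}\left[\Vert\nabla F(\tilde{\mathbf{w}}_m)\Vert^2\right]$, which the Polyak inequality converts to $-2\eta\mu\left(\mathbb{E}\left[F(\tilde{\mathbf{w}}_m)\right]-F(\mathbf{w}^*)\right)$, and the iterate-dependent part of the variance bound, $\mathbb{E}\left[\Vert\nabla F_k(\tilde{\mathbf{w}}_m)\Vert^2\right]\leq 2L^2\mathbb{E}\left[\Vert\tilde{\mathbf{w}}_m-\mathbf{w}^*\Vert^2\right]+2L^2\delta^2$, can only be re-expressed in terms of the gap through strong convexity, $\Vert\tilde{\mathbf{w}}_m-\mathbf{w}^*\Vert^2\leq\frac{2}{\mu}\left(F(\tilde{\mathbf{w}}_m)-F(\mathbf{w}^*)\right)$, which injects an extra factor $1/\mu$. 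Carried through, your assembly yields a contraction factor of the form $1-2\eta\mu+\frac{2\eta^2L^3}{\mu}g_{\text{D}}(\mathbf{r},b)$, a step-size condition of order $\eta\lesssim\frac{\mu^2}{L^3g_{\text{D}}(\mathbf{r},b)}$, and a fixed point whose denominator is not $2\mu-4\eta L^2g_{\text{D}}(\mathbf{r},b)$ --- a valid bound of the same flavor, but not the statement (\ref{digital}). The paper obtains exactly those constants by running the recursion on the squared distance instead: it writes $\mathbb{E}\left[\Vert\tilde{\mathbf{w}}_{m+1}-\mathbf{w}^*\Vert^2\right]=A_1+A_2$ with $A_1=\eta^2\mathbb{E}\left[\Vert\hat{\mathbf{g}}_{m,\text{D}}-\mathbf{g}_m\Vert^2\right]$ (the cross term vanishes by unbiasedness), bounds $A_2=\mathbb{E}\left[\Vert\tilde{\mathbf{w}}_m-\mathbf{w}^*-\eta\nabla F(\tilde{\mathbf{w}}_m)\Vert^2\right]$ via strong convexity to get the $(1-\eta\mu)$ factor, unrolls the distance recursion, and only at the very end applies $L$-smoothness once to convert distance into optimality gap --- producing the $\frac{L}{2}$ prefactor and the $L\phi(b)+2L^3\delta^2$ numerator. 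If you want to prove the theorem as stated, you must track $\mathbb{E}\left[\Vert\tilde{\mathbf{w}}_m-\mathbf{w}^*\Vert^2\right]$, not the function gap.

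The second gap is the cross-correlation obstacle you correctly identify but do not resolve. Sequential conditioning (quantization given sampling, then averaging over sampling) separates the quantization and packet-loss randomness from the sampling, but it leaves the off-diagonal variance terms $\left(\frac{\mathbb{E}[\chi_j\chi_k]}{r_jr_k}-1\right)(\mathbf{g}_m^j)^T\mathbf{g}_m^k$, $j\neq k$, intact; under without-replacement sampling $\mathbb{E}[\chi_j\chi_k]\neq r_jr_k$, and since the gradient inner products have arbitrary sign, these terms neither cancel nor are they dominated by the diagonal in any obvious way. The paper sidesteps the issue entirely with Jensen's inequality: using $\sum_k\alpha_k=1$, it writes the aggregation error as $\sum_k\alpha_kX_k$ with $X_k=\frac{\chi_k\xi_{k,\text{D}}}{r_k}\mathcal{Q}(\mathbf{g}_m^k)-\sum_i\alpha_i\mathbf{g}_m^i$ and bounds $\mathbb{E}\left[\left\Vert\sum_k\alpha_kX_k\right\Vert^2\right]\leq\sum_k\alpha_k\mathbb{E}\left[\Vert X_k\Vert^2\right]$, so no joint moment $\mathbb{E}[\chi_j\chi_k]$ ever appears; only the marginal facts $\mathbb{E}[\chi_k^2]=r_k$ and $\mathbb{E}[\xi_{k,\text{D}}^2]=1/p_k$ are needed, which is precisely where $g_{\text{D}}(\mathbf{r},b)=\sum_k\frac{\alpha_k}{p_kr_k}$ comes from. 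Without this (or an equivalent) device, the "exact variance decomposition" step in your plan cannot be completed.
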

\begin{proof}
Please refer to Appendix \ref{theo1}. \hfill $\square$
\end{proof}

\begin{theorem}[Analog Transmission]
For a fixed learning rate satisfying $\eta \leq \frac{\mu}{2L^2 g_{\text{A}}(\mathbf{r},\gamma_{\mathrm{th}})}$, the optimality gap of the distributed gradient update in the $(m+1)$-th iteration of the analog transmission
is equal to (\ref{analog}) at the top of the next page,
where $g_{\text{A}}(\mathbf{r},\gamma_{\mathrm{th}})\triangleq \sum_{k=1}^K \frac{\alpha_k}{r_k}\left(e^{\gamma_{\mathrm{th}}}+\frac{(1-\rho^2)\mathrm{E}_1\left(\gamma_{\mathrm{th}} \right)e^{2\gamma_{\mathrm{th}}}}{2\rho^2}\right)-1$, $\mathrm{E}_1(x)\!\triangleq\!\int_{x}^{\infty}\frac{e^{-t}}{t} \mathrm{d}t$, and $\varphi(\mathbf{r},\gamma_{\mathrm{th}})\!\triangleq\!\frac{B N_0 \gamma^2 e^{2\gamma_{\mathrm{th}}}}{2P_{\max}\rho^2 \gamma_{\mathrm{th}}}\max_k\left \{\frac{\alpha_k^2}{r_k^2 }d_k^{\alpha}\right\}$.
\end{theorem}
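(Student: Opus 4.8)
The plan is to reduce Theorem~2 to a one-step contraction on $\mathbb{E}[\|\tilde{\mathbf{w}}_m-\mathbf{w}^*\|^2]$, unroll it geometrically, and close with smoothness. Starting from $\tilde{\mathbf{w}}_{m+1}=\tilde{\mathbf{w}}_m-\eta\hat{\mathbf{g}}_{m,\text{A}}$, I expand $\|\tilde{\mathbf{w}}_{m+1}-\mathbf{w}^*\|^2$ and take expectation over sampling, CSI, and noise. By Lemma~2 the estimate in (\ref{ghata}) is unbiased, so the conditional mean of $\hat{\mathbf{g}}_{m,\text{A}}$ is $\nabla F(\tilde{\mathbf{w}}_m)$ and the cross term becomes $-2\eta\nabla F(\tilde{\mathbf{w}}_m)^T(\tilde{\mathbf{w}}_m-\mathbf{w}^*)$. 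I would keep the mean-gradient energy $\eta^2\|\nabla F(\tilde{\mathbf{w}}_m)\|^2$ explicit and pair it with this cross term: combining the $\mu$-strong convexity of (\ref{convex}) (valid for $F$ by Lemma~1) with the co-coercivity implied by $L$-smoothness (\ref{smooth}), the pair collapses, for small enough $\eta$, to $-\eta\mu\|\tilde{\mathbf{w}}_m-\mathbf{w}^*\|^2$, which is the source of the $1-\eta\mu$ in (\ref{analog}).

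The crux is the remaining estimator variance, which splits (by independence of the zero-mean noise $\bar{\mathbf{z}}_m$) into an aggregation variance and the noise energy $\mathbb{E}[\|\bar{\mathbf{z}}_m\|^2]$. For the aggregation variance, a Jensen step over the weights $\alpha_k$ (which sum to one) bounds $\mathbb{E}[\|\sum_k\chi_k\alpha_k\xi_{k,\text{A}}\mathbf{g}_m^k/r_k\|^2]$ by $\sum_k\alpha_k\mathbb{E}[\xi_{k,\text{A}}^2]\|\mathbf{g}_m^k\|^2/r_k$, so the entire channel dependence reduces to the per-device factor $\mathbb{E}[\xi_{k,\text{A}}^2]$; subtracting the unit-weight mean energy is what produces the $-1$ in $g_{\text{A}}(\mathbf{r},\gamma_{\mathrm{th}})$. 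Substituting the CSI model $h_k=\rho\hat{h}_k+\sqrt{1-\rho^2}v_k$ into the truncated-inversion distortion $\xi_{k,\text{A}}$ decomposes it into a deterministic part $\lambda\rho$ and a zero-mean error $\lambda\sqrt{1-\rho^2}\,\Re\{v_k^*\hat{h}_k\}/|\hat{h}_k|^2$; with $\lambda=e^{\gamma_{\mathrm{th}}}/\rho$ from Lemma~2, squaring and averaging over the exponential law of $|\hat{h}_k|^2$ above the cutoff $\gamma_{\mathrm{th}}$ gives exactly $\mathbb{E}[\xi_{k,\text{A}}^2]=e^{\gamma_{\mathrm{th}}}+(1-\rho^2)\mathrm{E}_1(\gamma_{\mathrm{th}})e^{2\gamma_{\mathrm{th}}}/(2\rho^2)$, the exponential integral arising from $\mathbb{E}[|\hat{h}_k|^{-2}\mathbf{1}\{|\hat{h}_k|^2\ge\gamma_{\mathrm{th}}\}]$. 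Finally, applying $L$-smoothness at each local optimum together with Assumption~4 gives $\|\mathbf{g}_m^k\|^2\le 2L^2\|\tilde{\mathbf{w}}_m-\mathbf{w}^*\|^2+2L^2\delta^2$, converting the aggregation variance into the $2L^2 g_{\text{A}}$ contraction term and the $2L^2\delta^2 g_{\text{A}}$ constant.

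For the noise energy I use the maximum-power constraint (\ref{pmax}) to fix the scaling $\zeta$: with $\beta_k$ as in (\ref{eq15}), the worst case $|\hat{h}_k|^2=\gamma_{\mathrm{th}}$ and $\|\mathbf{g}_m^k\|=\gamma$ (Assumption~3) makes $\max_{m,k}\|\beta_k\mathbf{g}_m^k\|^2=P_{\max}$ pin down $\zeta^2$ in terms of $P_{\max}$, $\gamma_{\mathrm{th}}$, $\rho$, and $\max_k\{\alpha_k^2 d_k^\alpha/r_k^2\}$; substituting into $\mathbb{E}[\|\bar{\mathbf{z}}_m\|^2]=\mathbb{E}[\|\Re\{\mathbf{z}_m\}\|^2]/\zeta^2$ delivers $\varphi(\mathbf{r},\gamma_{\mathrm{th}})$. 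Collecting everything yields the recursion $\mathbb{E}[\|\tilde{\mathbf{w}}_{m+1}-\mathbf{w}^*\|^2]\le(1-\eta\mu+2\eta^2 L^2 g_{\text{A}})\mathbb{E}[\|\tilde{\mathbf{w}}_m-\mathbf{w}^*\|^2]+\eta^2(\varphi+2L^2\delta^2 g_{\text{A}})$. The hypothesis $\eta\le\mu/(2L^2 g_{\text{A}})$ keeps the contraction factor in $(0,1)$ and the denominator $2\mu-4\eta L^2 g_{\text{A}}$ positive, so summing the geometric series and closing with $F(\tilde{\mathbf{w}}_{m+1})-F(\mathbf{w}^*)\le\frac{L}{2}\|\tilde{\mathbf{w}}_{m+1}-\mathbf{w}^*\|^2$ gives (\ref{analog}).

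The main obstacle is the exact evaluation of $\mathbb{E}[\xi_{k,\text{A}}^2]$: it is the coupling between the residual estimation error $v_k$ and the truncated inverse $1/|\hat{h}_k|^2$ that generates the exponential integral $\mathrm{E}_1(\gamma_{\mathrm{th}})$ and the characteristic $1/\rho^2$ scaling, and this CSI-induced computation error is precisely what distinguishes the analog bound from the digital one, where no such term appears. A secondary subtlety is the bookkeeping that yields $g_{\text{A}}$ with its $-1$ rather than a looser $g_{\text{A}}+1$: since sampling without replacement makes the indicators $\chi_k$ negatively correlated, the off-diagonal second-moment terms and the gradient-dissimilarity contribution must be separated carefully, and the Jensen reduction together with subtracting the mean-gradient energy is what keeps the constant tight.
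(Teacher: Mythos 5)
Your proposal follows the paper's proof in all essentials: the same one-step recursion on $\mathbb{E}[\Vert\tilde{\mathbf{w}}_m-\mathbf{w}^*\Vert^2]$ closed by $L$-smoothness, the same Jensen reduction of the aggregation error to per-device second moments, the same determination of $\zeta$ from the peak-power constraint (\ref{pmax}) via the worst case $\vert\hat{h}_k\vert^2=\gamma_{\mathrm{th}}$, $\Vert\mathbf{g}_m^k\Vert\le\gamma$ (which yields $\varphi(\mathbf{r},\gamma_{\mathrm{th}})$), and the same geometric unrolling. Your direct evaluation of $\mathbb{E}[\xi_{k,\text{A}}^2]$ — splitting $\xi_{k,\text{A}}$ into $\lambda\rho$ plus the zero-mean term $\lambda\sqrt{1-\rho^2}\,\Re\{v_k^*\hat{h}_k\}/\vert\hat{h}_k\vert^2$ and using $\mathbb{E}[\vert\hat{h}_k\vert^{-2}\mathbf{1}\{\vert\hat{h}_k\vert^2\ge\gamma_{\mathrm{th}}\}]=\mathrm{E}_1(\gamma_{\mathrm{th}})$ — is correct and reproduces the paper's constant $c\triangleq e^{\gamma_{\mathrm{th}}}+\frac{(1-\rho^2)\mathrm{E}_1(\gamma_{\mathrm{th}})e^{2\gamma_{\mathrm{th}}}}{2\rho^2}$, which the paper instead imports from \cite[Eq.~(25)]{imperfect}. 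The one structural difference is that you absorb the $\eta^2\Vert\nabla F\Vert^2$ term through co-coercivity (at the cost of an additional mild condition of the form $\eta\le 1/L$, not present in the theorem statement), whereas the paper keeps that term and cancels it against the $-\mathbb{E}[\Vert\nabla F\Vert^2]$ inside the gradient-dissimilarity term $B_2$.

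The step that does not hold up as written is your mechanism for the $-1$ in $g_{\text{A}}$. You apply Jensen to the uncentered sum, getting $\mathbb{E}[\Vert\sum_k\chi_k\alpha_k\xi_{k,\text{A}}\mathbf{g}_m^k/r_k\Vert^2]\le\sum_k\frac{\alpha_k c}{r_k}\Vert\mathbf{g}_m^k\Vert^2$, and then subtract the mean energy $\Vert\nabla F(\tilde{\mathbf{w}}_m)\Vert^2$. But after invoking $\Vert\mathbf{g}_m^k\Vert^2\le 2L^2\Vert\tilde{\mathbf{w}}_m-\mathbf{w}^*\Vert^2+2L^2\delta^2$, converting ``$-\Vert\nabla F\Vert^2$'' into ``$-1$'' would require the matching lower bound $\Vert\nabla F(\tilde{\mathbf{w}}_m)\Vert^2\ge 2L^2\Vert\tilde{\mathbf{w}}_m-\mathbf{w}^*\Vert^2+2L^2\delta^2$, which is false ($\nabla F$ vanishes at $\mathbf{w}^*$, and smoothness gives the reverse inequality); the subtracted term can only be discarded, leaving $g_{\text{A}}+1$ in place of $g_{\text{A}}$. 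Negative correlation of the $\chi_k$ is irrelevant here, since Jensen has already suppressed all cross terms. The repair is to apply Jensen to the \emph{centered} sum $\sum_k\alpha_k\left(\frac{\chi_k\xi_{k,\text{A}}}{r_k}-1\right)\mathbf{g}_m^k$, whose per-device second moments are exactly $\left(\frac{c}{r_k}-1\right)\mathbb{E}[\Vert\mathbf{g}_m^k\Vert^2]$, so that the coefficient $\sum_k\alpha_k\left(\frac{c}{r_k}-1\right)=g_{\text{A}}$ appears directly; this works only in tandem with your co-coercivity step, since otherwise the $\eta^2\Vert\nabla F\Vert^2$ from the descent term re-adds the lost unit. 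You should know that the paper's own appendix carries exactly the same slack: combining (\ref{b1a}), (\ref{cov}), (\ref{b2}) and (\ref{a2}) literally produces the recursion with coefficient $\sum_k\frac{\alpha_k c}{r_k}=g_{\text{A}}+1$ (just as in the digital case, where the add-back from $B_2$ and $A_2$ is what turns $\frac{1}{p_kr_k}-1$ into $g_{\text{D}}=\sum_k\frac{\alpha_k}{p_kr_k}$), so Theorem~2 as stated is tighter than what the paper's written steps justify; your co-coercivity framework combined with the centered Jensen bound is in fact the clean way to obtain it.
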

\begin{proof}
Please refer to Appendix \ref{theo2}. \hfill $\square$
\end{proof}

From \emph{Theorems 1-2}, we find that the convergence rate mainly depends on the choice of the learning rate $\eta$, while the imperfections in transmission also have a certain impact. We conclude the following immediate observations on the convergence rate.

\begin{remark}
As observed in (\ref{digital}) and (\ref{analog}), the convergence performace of an FL algorithm is negatively related to $g_{\text{D}}(\mathbf{r},b)$ for digital transmission and to $g_{\text{A}}(\mathbf{r},\gamma_{\mathrm{th}})$ for analog transmission. We refer to $g_{\text{D}}(\mathbf{r},b)$ and $g_{\text{A}}(\mathbf{r},\gamma_{\mathrm{th}})$ as the \emph{virtual sum weight} for the digital and analog transmissions, respectively, which reflects the degree of hindrance to the convergence imposed by unequal sampling and vulnerable wireless communication. Under the ideal case, with full device participation and no transmission outage, the \emph{virtual sum weight} equals to 1, otherwise it is amplified by the imperfect characteristics. It is interesting to note that, for devices with more data samples, i.e., larger $\alpha_k$, the impact of imperfections is exaggerated. 
\end{remark}

\begin{remark}
Comparing  $g_{\text{D}}(\mathbf{r},b)$ and $g_{\text{A}}(\mathbf{r},\gamma_{\mathrm{th}})$, it can be seen that the vulnerability of digital transmission introduces additional heterogeneity, i.e., varying $p_k$, which does not exist in the analog paradigm. This is because outage probability in the digital case is determined by channel conditions and varying across different devices. On the other hand, due to the uniform truncation threshold, all participating devices enjoy the same truncation probability in the analog transmission. Hence, in design of inclusion probabilities $\mathbf{r}$ for the digital case, we need to adapt the inclusion probabilities to both dataset size and channel condition. By contrast, in the case of analog transmission, only the heterogeneity of the dataset size needs to be considered.
\end{remark}

%\begin{remark}
%It is worth noting that the convergence under digital transmission can be accelerated by improving SNR while under the analog transmission it does not accelerate the convergence. Consequently, the virtual sum weight tends to the ideal case with sufficiently large SNR, i.e., the best convergence rate is achievable. This is due to the fact that coefficient distortion in the imperfect AirComp cannot be mitigated by increasing transmit power under a given level of channel estimation accuracy. To improve the convergence rate, optimization with respect to the truncation threshold can be useful,  and the best convergence rate is never achievable as long as the channel estimation is imperfect, i.e., $\rho<1$.
%\end{remark}

According to \emph{Theorems 1-2},  we are ready to derive the optimality gap after convergence for further evaluation in the following corollary, which reflects the ultimately achievable performance of the wireless FL.

\begin{corollary}
With sufficient iterations, the optimality gap achieved by digital and analog transmissions, respectively, converges to
\begin{align}
G_{\text{D}}&=\frac{\eta (L \phi(b) +2L^3\delta^2) g_{\text{D}}(\mathbf{r},b)}{2\mu-4\eta L^2 g_{\text{D}}(\mathbf{r},b)},\label{Gd}\\
G_{\text{A}}&= \frac{\eta \left(L\varphi(\mathbf{r},\gamma_{\mathrm{th}})+2L^3 \delta^2 g_{\text{A}}(\mathbf{r},\gamma_{\mathrm{th}}) \right)}{2\mu-4\eta L^2 g_{\text{A}}(\mathbf{r},\gamma_{\mathrm{th}})}.\label{Ga}
\end{align}
\end{corollary}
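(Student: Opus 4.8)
The plan is to read both steady-state gaps off as the $m\to\infty$ limits of the per-round bounds already proved in Theorems 1 and 2. Writing the digital bound (\ref{digital}) compactly as $\mathbb{E}[F(\tilde{\mathbf{w}}_{m+1})]-F(\mathbf{w}^*)\leq \frac{L}{2}\kappa_{\text{D}}^{\,m+1}\mathbb{E}[\|\tilde{\mathbf{w}}_0-\mathbf{w}^*\|^2]+G_{\text{D}}$, where $\kappa_{\text{D}}\triangleq 1-\eta\mu+2\eta^2 L^2 g_{\text{D}}(\mathbf{r},b)$ is the per-round contraction factor and $G_{\text{D}}$ is exactly the constant appearing in (\ref{Gd}), I would argue that the first (transient) term vanishes as the iterations grow while the second term, being independent of $m$, survives unchanged. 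The identical argument applied to (\ref{analog}), with $\kappa_{\text{A}}\triangleq 1-\eta\mu+2\eta^2 L^2 g_{\text{A}}(\mathbf{r},\gamma_{\mathrm{th}})$, then yields (\ref{Ga}). The whole corollary is therefore a limiting statement, and the only substantive work is to certify that $\kappa_{\text{D}},\kappa_{\text{A}}\in[0,1)$ so that $\kappa^{m+1}\to 0$.

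To bound the contraction factor I would lean on the learning-rate hypothesis, read as the strict inequality $\eta<\frac{\mu}{2L^2 g_{\text{D}}(\mathbf{r},b)}$. This gives $2\eta L^2 g_{\text{D}}(\mathbf{r},b)<\mu$, equivalently $2\eta^2 L^2 g_{\text{D}}(\mathbf{r},b)<\eta\mu$, so that $\kappa_{\text{D}}<1$; the very same inequality forces $4\eta L^2 g_{\text{D}}(\mathbf{r},b)<2\mu$, keeping the denominator $2\mu-4\eta L^2 g_{\text{D}}(\mathbf{r},b)$ of $G_{\text{D}}$ strictly positive and the expression finite. For the lower bound I would write $\kappa_{\text{D}}=1-\eta\bigl(\mu-2\eta L^2 g_{\text{D}}(\mathbf{r},b)\bigr)$ and note that the subtracted term, viewed as a function of $\eta$, is maximized at $\eta=\frac{\mu}{4L^2 g_{\text{D}}(\mathbf{r},b)}$ with value $\frac{\mu^2}{8L^2 g_{\text{D}}(\mathbf{r},b)}$; using $\mu\leq L$ and $g_{\text{D}}(\mathbf{r},b)\geq 1$ this stays below $1$, so $\kappa_{\text{D}}\geq 0$. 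Hence $\kappa_{\text{D}}\in[0,1)$, and exactly the same bracketing holds for $\kappa_{\text{A}}$ because $g_{\text{A}}(\mathbf{r},\gamma_{\mathrm{th}})$ occupies the same structural position in (\ref{analog}).

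With $\kappa_{\text{D}},\kappa_{\text{A}}\in[0,1)$ in hand, I would invoke the elementary fact that $\kappa^{m+1}\to 0$ as $m\to\infty$ for any $\kappa\in[0,1)$, take limits in the two bounds to drive the transient terms to zero, and read off the surviving constants as $G_{\text{D}}$ and $G_{\text{A}}$ in (\ref{Gd}) and (\ref{Ga}). I do not expect a genuinely hard step, since this is a limit of already-established bounds; the one point deserving explicit care is the boundary behaviour of the learning-rate condition, where the single strict inequality $\eta<\frac{\mu}{2L^2 g(\cdot)}$ must simultaneously pin $\kappa$ strictly below one and keep the denominator strictly positive. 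Verifying that both roles are discharged by this lone inequality, and that $\kappa\geq 0$ persists across all admissible $\eta$, is where I would be explicit rather than routine.
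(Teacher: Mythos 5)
Your proposal is correct and follows essentially the same route as the paper's own proof: take $m\to\infty$ in the bounds of Theorems 1--2, use the strict form of the learning-rate condition $\eta<\frac{\mu}{2L^2 g(\cdot)}$ to conclude that the contraction factor $\kappa=1-\eta\mu+2\eta^2L^2 g(\cdot)$ is strictly less than one (which simultaneously keeps the denominator $2\mu-4\eta L^2 g(\cdot)$ positive), so the transient term vanishes and the surviving constant is exactly $G_{\text{D}}$, respectively $G_{\text{A}}$. Your extra check that $\kappa\geq 0$ is more careful than the paper (which only verifies $\kappa<1$); the only caveat is that your transfer of this check to the analog case assumes $g_{\text{A}}(\mathbf{r},\gamma_{\mathrm{th}})\geq 1$, which, unlike $g_{\text{D}}(\mathbf{r},b)\geq 1$, does not follow from the definition (e.g., full participation with perfect CSI and small $\gamma_{\mathrm{th}}$ gives $g_{\text{A}}=e^{\gamma_{\mathrm{th}}}-1<1$), so that step would need a separate argument.
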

\begin{proof}
Consider the digital transmission scenario with a sufficient number of iterations. We have
\begin{align}
  &\lim_{m\to \infty} \mathbb{E}\left[ F(\tilde{\mathbf{w}}_{m+1})\right]-F(\mathbf{w}^*)  \nonumber \\
  &\enspace \leq \lim_{m\to \infty} \frac{L}{2}\left( 1-\eta \mu +2\eta^2 L^2g_{\text{D}}(\mathbf{r},b)\right)^{m+1} \mathbb{E}\left[\left \Vert \tilde{\mathbf{w}}_{0} -\mathbf{w}^*\right \Vert^2 \right] \nonumber \\
  &\enspace \quad +\frac{\eta (L\phi(b) +2L^3\delta^2) g_{\text{D}}(\mathbf{r},b)}{2\mu-4\eta L^2 g_{\text{D}}(\mathbf{r},b)}\nonumber \\
  &\enspace \overset{\text{(a)}}{=}\frac{\eta (L\phi(b) +2L^3\delta^2) g_{\text{D}}(\mathbf{r},b)}{2\mu-4\eta L^2 g_{\text{D}}(\mathbf{r},b)}= G_{\text{D}},
\end{align}
where the inequality is obtained through \emph{Theorem 1} and the equality in (a) is due to the fact that  $\eta < \frac{\mu}{2L^2 g_{\text{D}}(\mathbf{r},b)}$, i.e., $\left( 1-\eta \mu +2\eta^2 L^2g_{\text{D}}(\mathbf{r},b)\right)<1$. Hence, the achieved optimality gap at convergence is bounded by $G_{\text{D}}$. As for the analog transmission, the proof is almost the same and is omitted here for simplity. \hfill $\square$
\end{proof}

%From \emph{Corollary 1}, we note that the optimality gap with digital transmission is made up of two aspects, namely quantization errors and partial participation. On the other hand, in analog transmission case, the optimality gap comes from the noise, imperfect CSI, and partial participation. 
From \emph{Corollary 1}, we further compare the two typical paradigms from the following perspectives and conclude insightful remarks that are instructive for the deployment of FL in wireless networks. As a summary, we list main comparison results in Table \ref{table:2}.
For the sake of simplicity in analysis, without loss of generality, we drop the unbalance of the datasets and assume uniform inclusion probabilities, i.e., $\alpha_k =\frac{1}{K}$, and $r_k=\frac{N}{K}$, $\forall k$, which does not cause any essential changes. Also we set that $T_{\max}=T_{\text{A}}$. Note that the learning rate is assumed to be sufficiently small and hence the convergence rate remains the same for all cases.

\begin{table*}[!t]   
\renewcommand\arraystretch{1.2}
\begin{center}   
\caption{Main Comparison Results with Respect To Optimality Gap}  
\label{table:2} 
\begin{tabular}{|m{1.5cm}<{\centering}|m{3cm}<{\centering}|m{3cm}<{\centering}|m{4.5cm}<{\centering}|m{3cm}<{\centering}|}   
\hline   
\multirow{2}*{Paradigms} &  \multicolumn{2}{c|}{Transmit power budget, $P$} & \multirow{2}*{Device number, $N$} & \multirow{2}*{Imperfect CSI, $\rho$} \\
\cline{2-3}
&Low SNR & High SNR & &\\
\hline 
Digital & $\mathcal{O}\left( \mathrm{exp}\left(\frac{\varepsilon}{P}\right)\right)$ $\searrow$ & $\rightarrow$ $G_{\text{D}}^{\infty}$ & $\mathcal{O}\left(\frac{1}{N}\mathrm{exp}(\varepsilon_1 2^{\varepsilon_2N}/N)\right)$  $\nearrow$& / \\
\hline
Analog & $\mathcal{O}\left(\frac{1}{P}\right)$ $\searrow$ &$\rightarrow$ $G_{\text{A}}^{\infty}$ & $\mathcal{O}\left(\frac{1}{N}\right)$ $\searrow$ & $\mathcal{O}\left(\frac{1}{\rho^2}\right)$ $\nearrow$  \\
\hline
\end{tabular}   
\end{center}  
\begin{tablenotes}
\item * The upward arrow indicates amplification at a certain order, while the downward arrow has the opposite meaning. The horizontal arrow indicates that it ultimately tends towards a fixed value.
\end{tablenotes}
\vspace{-0.2cm}
\end{table*}

\subsubsection{Impact of Transmit Power}
At low SNR levels, the achievable optimality gap under the digital transmission, $G_{\text{D}}$, vanishes as $\mathcal{O}\left(\mathrm{exp}\left({\varepsilon}/{P_{\max}}\right)\right)$ with the maximum transmit power budget $P_{\max}$, where $\varepsilon\triangleq \max_k \left \{ \frac{B N_0 \theta}{2N d_k^{-\alpha}}\right \}$. At high SNR regime, i.e., $P_{\max}\to \infty$, the successful transmission probability $p_k \to 1$, $\forall k$ and $G_{\text{D}}$ tends to
\begin{align}\label{infd}
G_{\text{D}}^{\infty}\triangleq \lim_{P_{\max}\to \infty} G_{\text{D}}=\frac{\eta (L \phi(b) +2L^3\delta^2)K}{2\mu N-4\eta L^2K}.
\end{align}
On the other hand, the decay rate for $G_{\text{A}}$ is equal to $\mathcal{O}\left({1}/{P_{\max}}\right)$ with low SNR values and the high SNR-limiting value is
\begin{align}\label{infa}
G_{\text{A}}^{\infty}\triangleq \lim_{P_{\max}\to \infty} G_{\text{A}}=\frac{2\eta L^3\delta^2\left(Kc-N\right)}{2\mu N-4\eta L^2\left(Kc-N\right)},
\end{align}
where $c\triangleq e^{\gamma_{\mathrm{th}}}+\frac{(1-\rho^2)\mathrm{E}_1\left(\gamma_{\mathrm{th}} \right)e^{2\gamma_{\mathrm{th}}}}{2\rho^2}$.

%\begin{remark}[Low SNR case]
%At low SNR regimes, the analog transmission achieves a lower optimality gap than the digital case. This is because with sufficiently small power budget, the successful transmission probability under the digital case tends to zero. In other words, no data can be effectively transmitted to the PS and hence the FL algorithm fails. On the contrary, the transmission is not completely truncated under the analog AirComp.
%\end{remark}

\begin{remark}
As SNR increases, the optimality gap for the analog case mainly comes from the non-IID datasets while the impact of the noise asymptotically diminishes. For the digital case, however, quantization errors additionally impose an impact. Under the analog transmission, the negative impact of non-IID datasets is enlarged due to imperfect AirComp. Imperfect CSI results in mismatched channel inversion in AirComp, rendering perfect computation of weighted sum impossible. 
Moreover, the performance degradation brought by imperfect CSI in the analog transmission cannot be mitigated by occupying more resources. Conversely, in the digital transmission, the convergence performance can be improved by occupying additional resources for increasing the number of quantization bits.
\end{remark}

\subsubsection{Impact of Device Number}
With the increasing number of participating devices, $N$, the virtual sum rate for the analog transmission, $g_{\text{A}}(\mathbf{r},\gamma_{\mathrm{th}})$, decreases at a rate of $\frac{1}{N}$, i.e., a faster convergence rate is achieved. As for the optimality gap, the impact of non-IID datasets asymptotically dominates $G_{\text{A}}$ and the decay rate is equal to $\mathcal{O}(1/N)$. Due to the involvement of more devices, a more accurate global gradient is obtained at the PS, which in turn facilitates the FL convergence and leads to better performance. Meanwhile, since different devices involved in the AirComp share the same time-frequency resource, an increase in access devices causes no deterioration of the AirComp performance, fully capturing the performance gain from more participating devices.

On the other hand, for the digital case, convergence performance does not necessarily monotonically change with $N$. Although more participating devices do bring performance gains,
it also leads to a significant deterioration of the transmission performance considering that limited communication resources are divided among additional users. Thus the convergence is compromised between communication reliability and the computation accuracy for wireless FL.
Specifically, the optimality gap, $G_{\text{D}}$, enlarges with a rate of $\mathcal{O}\left(\frac{1}{N}\mathrm{exp}(\varepsilon_1 2^{\varepsilon_2 N}/N)\right)$ with sufficiently large $N$, where $\varepsilon_1 =\frac{BN_0}{2Pd_K^{-\alpha}}$ and $\varepsilon_2=\frac{b+1}{M}$.

\begin{remark}
Benefiting from the characteristics of AirComp, more participating devices in the analog transmission always lead to performance improvement regardless of other parameters. Hence, allowing all active devices to participate in the FL training is the best choice for analog transmission. By contrast, in the digital transmission, it is necessary to seek a balance between the transmission performance and diversity gain through an optimization of $N$.
\end{remark}

\subsubsection{Impact of Imperfect CSI}
The imperfect CSI at the transmitter only affects the performance of analog transmission, which deteriorates at the order of $\frac{1}{\rho^2}$. Due to  imperfect CSI, the aggregation computation and the truncation decision in AirComp are contaminated, thus leading to a mismatch in the model aggregation and the impact of noise amplification. 

\begin{remark}
After incorporating computation capabilities into the analog case,  the emergence of computation error as a new source of error has positioned computational accuracy as a crucial factor affecting the convergence performance. It is concluded that CSI is a key factor affecting the performance gain brought by AirComp. Moreover, the truncation threshold $\gamma_{\mathrm{th}}$ should be optimized to adapt different levels of channel estimation accuracy. It can be effectively solved via bisection search in \cite{imperfect}.
\end{remark}

\subsubsection{Impact of the Number of Quantization Bits}
In the digital transmission, the number of quantization bits, $b$, also influences the FL performance in the following implicit ways. By selecting the minimum feasible $\theta = 2^{\frac{Nd(b+1)}{BT_{\max}}}-1$ in (\ref{eq22}), the achievable optimality gap $G_{\text{D}}$ is rewritten as
\begin{align}
    G_{\text{D}}
    &\approx \frac{\eta}{2\mu} \left(\frac{L\Delta^2}{(2^b-1)^2} +2L^3\delta^2\right) g_{\mathrm{D}}(\mathbf{r},b) \nonumber \\
    &= \frac{\eta}{2\mu} \left(\frac{L\Delta^2}{(2^b-1)^2} +2L^3\delta^2\right) \nonumber \\
    &\quad \times \left(\sum_{k=1}^K \frac{\alpha_k}{r_k}\mathrm{exp}\left(\frac{BN_0\left(2^{\frac{Nd(b+1)}{BT_{\max}}}-1\right)}{2NP_k d_k^{-\alpha}}\right)\right),
\end{align}
where the approximation is obtained in region of $\eta \ll \frac{\mu}{2L^2 g_{\text{D}}(\mathbf{r},b)}$. It is found that as $b$ increases, $G_{\text{D}}$ tends to first decrease and then increase. This is due to the diminishing quantization error term $\phi(b)$ with an increasing quantization accuracy and finally $G_{\text{D}}$ is dominated by the impact of packet loss. Therefore, it is necessary to optimize of the integer variable $b$ to pursue better convergence performance, which can be solved by a low-complexity exhaustive search method.

\subsection{Convergence Analysis under the Average Power Budget}
We consider the convergence with the average transmit power budget. For the digital transmission, by replacing $P_{\max}$ with $P_{\mathrm{ave}}$, we derive the similar results as \emph{Theorem 1} and is omitted here due to page limit. As for the analog transmission, we have the following corollary.

\begin{corollary}
For a fixed learning rate satisfying $\eta \leq \frac{\mu}{2L^2 g_{\text{A}}(\mathbf{r},\gamma_{\mathrm{th}})}$, the optimality gap of the distributed gradient update in the $(m+1)$-th iteration under the analog transmission follows
\begin{align}
\mathbb{E}&\left[ F(\tilde{\mathbf{w}}_{m+1})\right]-F(\mathbf{w}^*)\nonumber \\
 &\leq \frac{L}{2}\left( 1-\eta \mu +2\eta^2 L^2g_{\text{A}}(\mathbf{r},\gamma_{\mathrm{th}})\right)^{m+1} \mathbb{E}\left[\left \Vert \tilde{\mathbf{w}}_{0} -\mathbf{w}^*\right \Vert^2 \right]\nonumber \\
&\quad +\frac{\eta \left(L\varphi_{\mathrm{ave}}(\mathbf{r},\gamma_{\mathrm{th}})+2L^3 \delta^2 g_{\text{A}}(\mathbf{r},\gamma_{\mathrm{th}}) \right)}{2\mu-4\eta L^2 g_{\text{A}}(\mathbf{r},\gamma_{\mathrm{th}})},
\end{align}
where $\varphi_{\mathrm{ave}}(\mathbf{r},\gamma_{\mathrm{th}})\triangleq \frac{B N_0 \gamma^2 e^{2\gamma_{\mathrm{th}}}\mathrm{E}_1(\gamma_{\mathrm{th}})}{2P_{\mathrm{ave}}\rho^2 }\max_k\left \{\frac{\alpha_k^2}{r_k^2 }d_k^{\alpha}\right\}$. The optimality gap with sufficient iterations follows
\begin{align}\label{Ga2}
G_{\text{A},\mathrm{ave}}= \frac{\eta \left(L\varphi_{\mathrm{ave}}(\mathbf{r},\gamma_{\mathrm{th}})+2L^3 \delta^2 g_{\text{A}}(\mathbf{r},\gamma_{\mathrm{th}}) \right)}{2\mu-4\eta L^2 g_{\text{A}}(\mathbf{r},\gamma_{\mathrm{th}})}.
\end{align}
\end{corollary}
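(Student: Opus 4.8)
The plan is to recycle the machinery of \emph{Theorem 2} essentially verbatim, observing that switching from the maximum-power budget (\ref{pmax}) to the average-power budget (\ref{pave}) alters the analysis in exactly one place: the value of the scaling factor $\zeta$, which in turn fixes the equivalent noise floor. First I would note that the gradient estimate (\ref{ghata}), its unbiasedness via the choice $\lambda=e^{\gamma_{\mathrm{th}}}/\rho$ from \emph{Lemma 2}, and the aggregation-distortion coefficient $g_{\text{A}}(\mathbf{r},\gamma_{\mathrm{th}})$ are all independent of which power constraint is imposed, since they originate from the CSI mismatch $\xi_{k,\text{A}}$ rather than from the transmit-power scaling. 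Consequently the entire contraction recursion leading to (\ref{analog}) is untouched, and only the additive term involving $\varphi$ needs to be recomputed.

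Second, I would impose the average-power constraint (\ref{pave}) on $\|\beta_k\mathbf{g}_m^k\|^2$. Substituting the truncated channel-inversion factor (\ref{eq15}) gives $\|\beta_k\mathbf{g}_m^k\|^2=\zeta^2\lambda^2\tfrac{\alpha_k^2 d_k^{\alpha}}{r_k^2|\hat{h}_k|^2}\|\mathbf{g}_m^k\|^2$ on the event $|\hat{h}_k|^2\ge\gamma_{\mathrm{th}}$ and zero otherwise. Bounding $\|\mathbf{g}_m^k\|^2\le\gamma^2$ via \emph{Assumption 3} and taking the expectation over the channel, the feasibility condition collapses to controlling $\zeta^2$ through the single quantity $\mathbb{E}\!\left[|\hat{h}_k|^{-2}\mathbf{1}(|\hat{h}_k|^2\ge\gamma_{\mathrm{th}})\right]$, where $\mathbf{1}(\cdot)$ is the indicator of the truncation event.

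The key step, and the only genuine divergence from \emph{Theorem 2}, is evaluating this truncated expectation. Since $\hat{h}_k\sim\mathcal{CN}(0,1)$, the gain $|\hat{h}_k|^2$ is unit-mean exponential, so $\mathbb{E}\!\left[|\hat{h}_k|^{-2}\mathbf{1}(|\hat{h}_k|^2\ge\gamma_{\mathrm{th}})\right]=\int_{\gamma_{\mathrm{th}}}^{\infty}\tfrac{e^{-t}}{t}\,\mathrm{d}t=\mathrm{E}_1(\gamma_{\mathrm{th}})$. Solving the resulting constraint for the largest admissible $\zeta$ and inserting $\lambda^2=e^{2\gamma_{\mathrm{th}}}/\rho^2$ yields the equivalent noise power $\tfrac{BN_0}{2\zeta^2}=\varphi_{\mathrm{ave}}(\mathbf{r},\gamma_{\mathrm{th}})$, which is precisely the claimed expression. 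This is where the contrast with the maximum-power case lives: constraint (\ref{pmax}) is dominated by the worst-case realization sitting at the truncation edge $|\hat{h}_k|^2=\gamma_{\mathrm{th}}$, producing the factor $1/\gamma_{\mathrm{th}}$ in $\varphi$, whereas averaging over the truncated tail replaces it by $\mathrm{E}_1(\gamma_{\mathrm{th}})$.

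Finally, substituting $\varphi_{\mathrm{ave}}$ for $\varphi$ in the bound (\ref{analog}) of \emph{Theorem 2} gives the stated finite-iteration optimality gap, and taking $m\to\infty$ exactly as in \emph{Corollary 1}, using $\eta<\tfrac{\mu}{2L^2 g_{\text{A}}(\mathbf{r},\gamma_{\mathrm{th}})}$ so that the geometric factor is strictly below one, annihilates the transient term and leaves $G_{\text{A},\mathrm{ave}}$ in (\ref{Ga2}). I expect no real obstacle beyond the exponential-integral computation, since every other ingredient is inherited directly from \emph{Theorem 2}; the main care required is ensuring the expectation is taken on the truncated support so that $\mathrm{E}_1(\gamma_{\mathrm{th}})$, rather than a divergent integral near the origin, emerges.
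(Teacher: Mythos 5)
Your proposal is correct and follows essentially the same route as the paper's proof: the paper likewise isolates the scaling factor $\zeta$ as the only quantity affected by switching to the average-power budget, evaluates $\mathbb{E}\bigl[\vert\hat{h}_k\vert^{-2}\bigr]$ on the truncation event as $\mathrm{E}_1(\gamma_{\mathrm{th}})$ for the unit-mean exponential gain, solves (\ref{pave}) for the feasible $\zeta$, and then reuses the steps of Appendix~\ref{theo2} verbatim to replace $\varphi$ by $\varphi_{\mathrm{ave}}$. Your added remark contrasting the worst-case factor $1/\gamma_{\mathrm{th}}$ with the averaged factor $\mathrm{E}_1(\gamma_{\mathrm{th}})$ matches the paper's own Remark 6 and is not needed for the proof itself.
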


\begin{proof}
please refer to Appendix \ref{coll1}. \hfill $\square$
\end{proof}

\begin{remark}
It is worth noting that $\mathrm{E}_1(\gamma_{\mathrm{th}})<\frac{1}{\gamma_{\mathrm{th}}}$ when $\gamma_{\mathrm{th}}>0$. Compared with the maximum transmit power budget, a smaller optimality gap for the analog transmission is achieved with the average power budget. Due to the need for channel alignment in AirComp, the performance is dominantly limited by the device with the worst channel condition. Furthermore, the strict peak power constraint amplifies the impact of worst-case channel conditions, resulting in looser convergence performance compared to the long-term constraint. 
%However, unlike the analog case, different devices do not interfere with each other under the FDMA based digital transmission, thus transmitting at full power can achieve the optimal convergence.
\end{remark}

To summarize, while the analog AirComp improves the spectrum utilization compared to the digital  paradigm, it faces challenges in fully utilizing the power resource, particularly with strict peak power constraints. Conversely, orthogonal access in digital transmission is not suitable for scenarios with massive access due to the limitations in spectrum resources.
%In addition, the heterogeneity of wireless channels is inevitably introduced in the digital transmission, which needs to be carefully suppressed through the device sampling optimization discussed in the following section.

\subsection{Discussions on Scenarios with Advanced System Designs}
To facilitate  performance analysis, we introduce assumptions regarding the system design, including multiple access, parameter quantization, and power control methods. Subsequently, we delve into the implications of advanced system designs on the FL performance and comparison.

In the digital transmission, the FL performance can primarily be improved from two aspects, namely enhancing transmission reliability and optimizing resource utilization. Specifically, advanced transmissions strategies help minimize transmission errors and packet losses due to channel fading. Furthermore, if other resource allocation methods, such as the model compression design and device scheduling strategies, are exploited toward the FL tasks, they prioritize crucial parameter/device transmissions and thus lifting the resource utilization. On the other hand, in the analog transmission, the FL performance through AirComp is primarily influenced by the over-the-air computational accuracy. Optimized transceiver and power control designs help mitigate the negative impact of channel fading on the FL performance.

While further optimization of system designs enhances performance, it is essential to note that the performance limits for the digital and analog transmissions remains unchanged. As observed in the above analytical results, in the digital transmission paradigm, due to the decoupling of the communication and computation processes, the number of bits that can be accurately transmitted with the limited resources is determined, which places an upper bound of the FL performance. In contrast, within the analog transmissions, the receiver does not aim to recover information from individual sources but instead prioritizes the precision of computation results derived from the over-the-air superimposed signals, thereby making computational accuracy a decisive role. Hence, the performance limit of the analog transmission is contingent upon the channel estimation accuracy and additive noise level.

\section{Device Sampling Optimization}
Based on the derived results in Section IV, we are able to further establish an optimization design of the device sampling for the wireless FL to improve the convergence.

\subsection{Digital Transmission}
By direct inspection of (\ref{Gd}), the optimality gap $G_{\text{D}}$ monotonically decreases with a decreasing virtual sum weight. Hence, the device sampling optimization problem with the digital transmission is formulated as
\begin{align}\label{p11}
\mathop{\text{minimize}}_{\mathbf{r}} &\quad g_{\text{D}}(\mathbf{r},b)=\sum_{k=1}^K \frac{\alpha_k}{p_k r_k}\nonumber \\
\text{subject to}& \quad \sum_{k=1}^K r_k =N,\enspace  r_k\leq 1,\enspace k=1,2,\cdots,K,
\end{align}  
which is a convex problem. By exploiting the KKT conditions, we obtain the optimal inclusion probability as 
\begin{align}\label{optr}
r_k^*=\min \left\{ \sqrt{\frac{\alpha_k}{\nu p_k}},\enspace 1\right\},
\end{align}
where $\nu$ is the Lagrangian multiplier and it is selected to satisfy $\sum_{k=1}^K r_k^*=N$. Note that the value of $\sum_{k=1}^K r_k^*$ varies monotonically with $\nu$ and thus we can rely on a bisection-based search method \cite{energy} to get the optimal solution of problem (\ref{p11}).

\begin{remark}
The optimal inclusion probability is positively correlated with the local dataset size while it behaves conversely correlated with the successful transmission probability. In other words, a device with a larger dataset is deemed more important for model training, thereby deserving a sampling bias. Conversely, devices with lower successful transmission probabilities contribute less to the model training process, requiring more frequent sampling to compensate. Thus, the goal of our inclusion probability optimization is to address the imbalances in the dataset size, and the heterogeneity introduced by uneven channel fading. It ensures fair and effective participation among diverse devices. 
\end{remark}

Moreover, note that the influence of quantization error and data heterogeneity are equally amplified by $g_\text{D}(\mathbf{r},b)$. It indicates that the optimization of inclusion probabilities $\mathbf{r}$ cannot adequately adapt to varying local data distributions.

\begin{figure*}[!t]
\subfigure[]{
\begin{minipage}[t]{0.5\linewidth}
  \centering
  \includegraphics[width=3in]{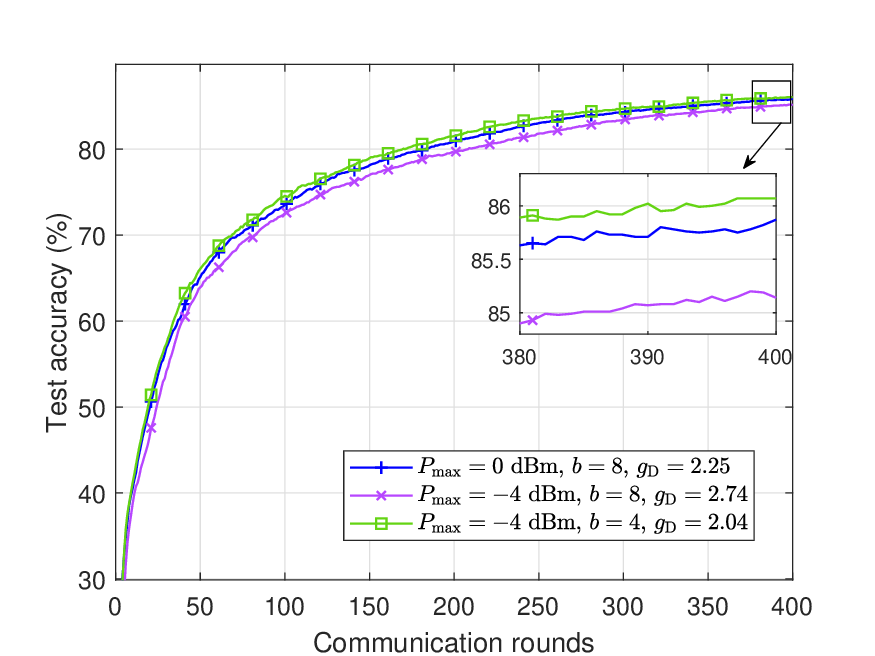}
 \end{minipage}
}
\subfigure[]{
\begin{minipage}[t]{0.5\linewidth}
  \centering
  \includegraphics[width=3in]{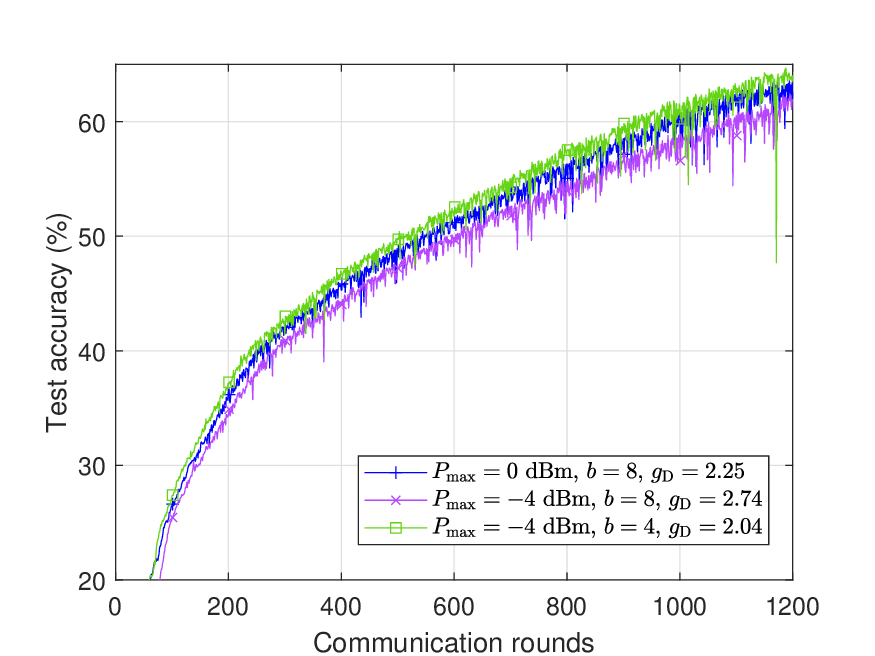}
 \end{minipage}
}
\caption{Convergence performance under digital transmission: (a) MNIST dataset, (b) CIFAR-10 dataset\label{fig3}.}
\vspace{-0.5cm}
\end{figure*}

\begin{figure*}[!t]
\subfigure[]{
\begin{minipage}[t]{0.5\linewidth}
  \centering
  \includegraphics[width=3in]{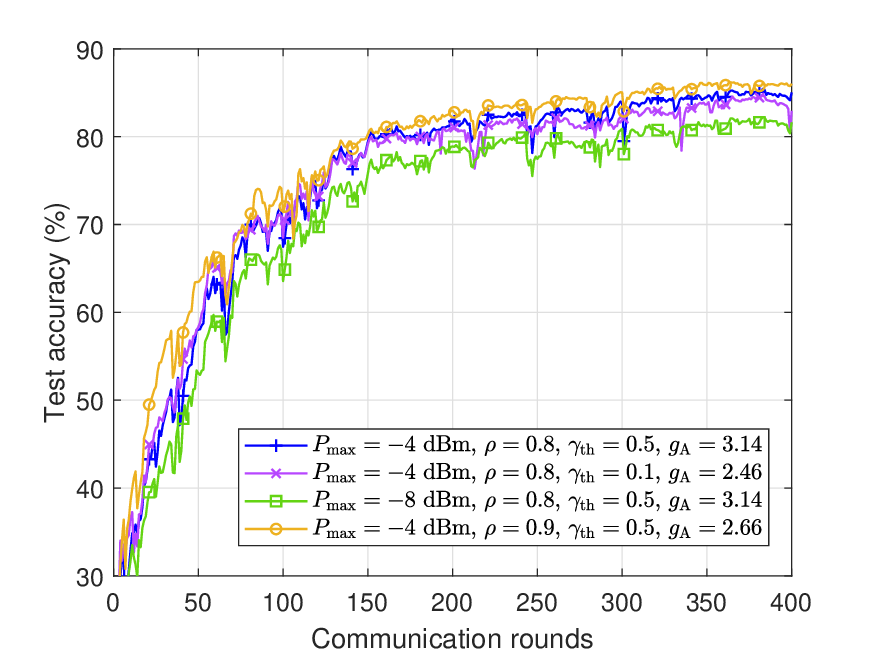}
 \end{minipage}
}
\subfigure[]{
\begin{minipage}[t]{0.5\linewidth}
  \centering
  \includegraphics[width=3in]{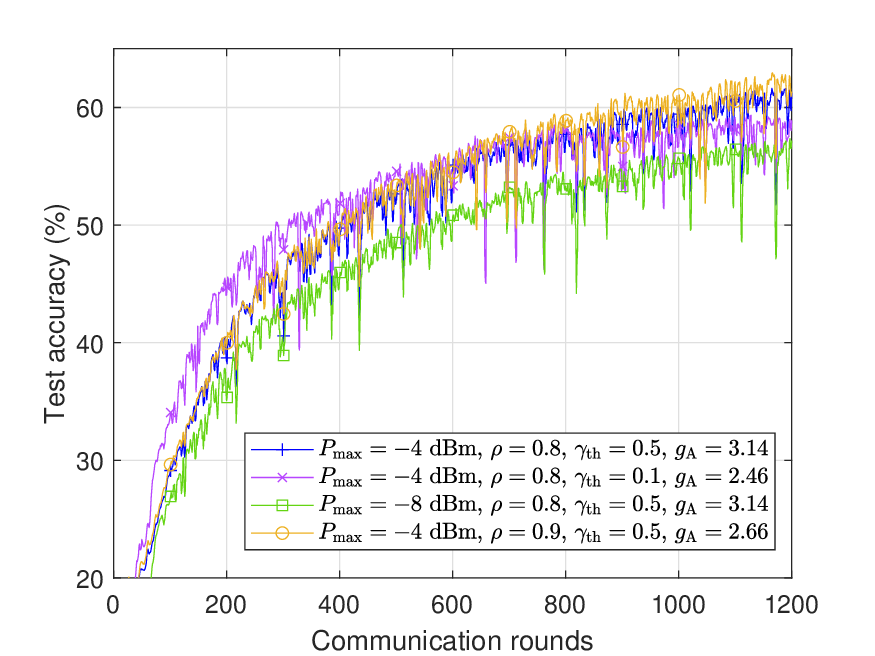}
 \end{minipage}
}
\caption{Convergence performance under analog transmission: (a) MNIST dataset, (b) CIFAR-10 dataset.\label{fig32}}
\vspace{-0.5cm}
\end{figure*}

\subsection{Analog Transmission}
As for the analog transmission, the device sampling optimization is expressed as
\begin{align}\label{p21}
\mathop{\text{minimize}}_{\mathbf{r}} &\quad \frac{\varphi(\mathbf{r},\gamma_{\mathrm{th}})+2L^2 \delta^2 g_{\text{A}}(\mathbf{r},\gamma_{\mathrm{th}}) }{2\mu-4\eta L^2 g_{\text{A}}(\mathbf{r},\gamma_{\mathrm{th}})}\nonumber \\
\text{subject to}&\quad  \sum_{k=1}^K r_k =N,\enspace  r_k\leq 1,\enspace k=1,2,\cdots,K.
\end{align}
Note that under the average transmit power budget, (\ref{Ga2}) only differs from the objective value in the constant term, and hence we will not discuss it separately. Considering the intractable fractional form of the objective function in (\ref{p21}), we rely on the well-known Dinkelbach algorithm for reformulation \cite{dinkel,fp2}. According to the definition of $\varphi(\mathbf{r},\gamma_{\mathrm{th}})$ and $g_{\text{A}}(\mathbf{r},\gamma_{\mathrm{th}})$ in (\ref{Ga}), it is easy to check that the denominator of the objective function in (\ref{p21}) is concave and the numerator is convex. Hence, the iterative Dinkelbach algorithm guarantees to converge to the global optimum of (\ref{p21}). Concretely, in the $t$-th iteration, we reformulate the problem in (\ref{p21}) as 
\begin{align}\label{p22}
\mathop{\text{minimize}}_{\mathbf{r}}  &\quad \varphi(\mathbf{r},\gamma_{\mathrm{th}})+(2L^2 \delta^2 +4\eta L^2 \varsigma^{(t-1)})g_{\text{A}}(\mathbf{r},\gamma_{\mathrm{th}}) \nonumber \\
\text{subject to}&\quad  \sum_{k=1}^K r_k =N,\enspace  r_k\leq 1,\enspace k=1,2,\cdots,K.
\end{align}
where $\varsigma^{(t-1)}$ is a constant determined in the previous round. Note that the problem in (\ref{p22}) is convex and thus can be solved by numerical convex program solvers, e.g., CVX tools \cite{cvx}. After obtaining the optimal $\mathbf{r}^{(t)}$ of the $t$-th subproblem in (\ref{p22}),  the  auxiliary constant is updated as
\begin{align}\label{update}
\varsigma^{(t)}=\frac{\varphi(\mathbf{r}^{(t)},\gamma_{\mathrm{th}})+2L^2 \delta^2 g_{\text{A}}(\mathbf{r}^{(t)},\gamma_{\mathrm{th}}) }{2\mu-4\eta L^2 g_{\text{A}}(\mathbf{r}^{(t)},\gamma_{\mathrm{th}})}.
\end{align}
Iterating the above steps until convergence, we obtain the optimal $\mathbf{r}$ of the problem in (\ref{p21}).

\begin{remark}
Unlike the digital transmission case, the device sampling optimization is committed to seeking a trade-off between the equivalent noise power $\varphi(\mathbf{r},\gamma_{\mathrm{th}})$ and virtual sum weight $g_{\text{A}}(\mathbf{r},\gamma_{\mathrm{th}})$, and the parameter $\delta$ functions as a weighting factor to facilitate the optimal trade-off. At high SNR regimes or with extremely uneven local data distributions, the noise term is comparably ignorable and hence the optimality gap is dominated by $g_\text{A}$. Hence, the optimization of $\mathbf{r}$ is isolated from specific channel conditions and only needs to match the size of local datasets.
\end{remark}

\section{Numerical Results}

In this section, we provide simulation results to verify the performance analysis and the inclusion probability optimization. We deploy $K=20$ edge devices uniformly distributed  in a square area with radius $500$ m and a PS at the center of the square area. The most popular MNIST dataset and CIFAR-10 dataset are exploited for the FL performance evaluation.
The MNIST dataset contains 10 classes of handwritten digits ranging from 0 to 9 and we train a multi-layer perceptron (MLP) with $d=23,860$ parameters via the wireless FL algorithm for classification purposes. Moreover, the CIFAR-10 dataset includes 10 classes with labels 0-9 and we train a convolutional neural network (CNN) with $d=60,000$ parameters. The trained CNN contains two convolutional layers and three fully connected layers. Max pooling operation is conducted following each convolutional layer and the activation function is ReLU. Different edge devices own different data samples, and each local dataset has up to two types of data samples to capture the non-IID characteristic.

Unless otherwise specified, the other parameters are set as: the number of participating devices $N=10$, the bandwidth, $B=1$ MHz, the path loss exponent, $\alpha=3$,  the noise power $N_0=-80$ dBm/Hz, the maximum transmit power budget, $P_{\max}=0$ dB, the number of quantization bits, $b=8$, the truncation threshold, $\gamma_{\mathrm{th}}=0.5$, the delay target $T_{\max}$ is equal to $T_{\text{A}}$ in (\ref{delaya}), and the learning rate $\eta=0.01$. We set $L=8$ and $\mu=2$, which fall within the existing typical range of values in \cite{value1,value2}. Additionally, the parameter $\delta$, serving as an upper bound of $\left \Vert \mathbf{w}_k^*-\mathbf{w}^* \right \Vert^2$, is estimated through simulation tests. 

\subsection{Convergence Performance}

\begin{figure}[!t]
\centering
\includegraphics[width=3in]{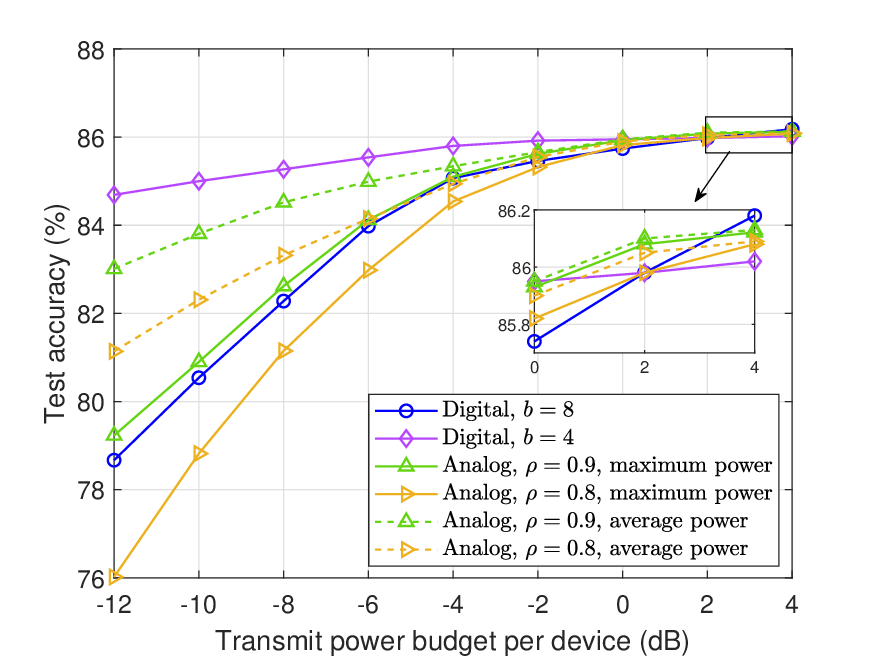}
\caption{ Test accuracy versus transmit power budget.}\label{fig4}
\vspace{-0.5cm}
\end{figure}
\begin{figure}[!t]
\centering
\includegraphics[width=3in]{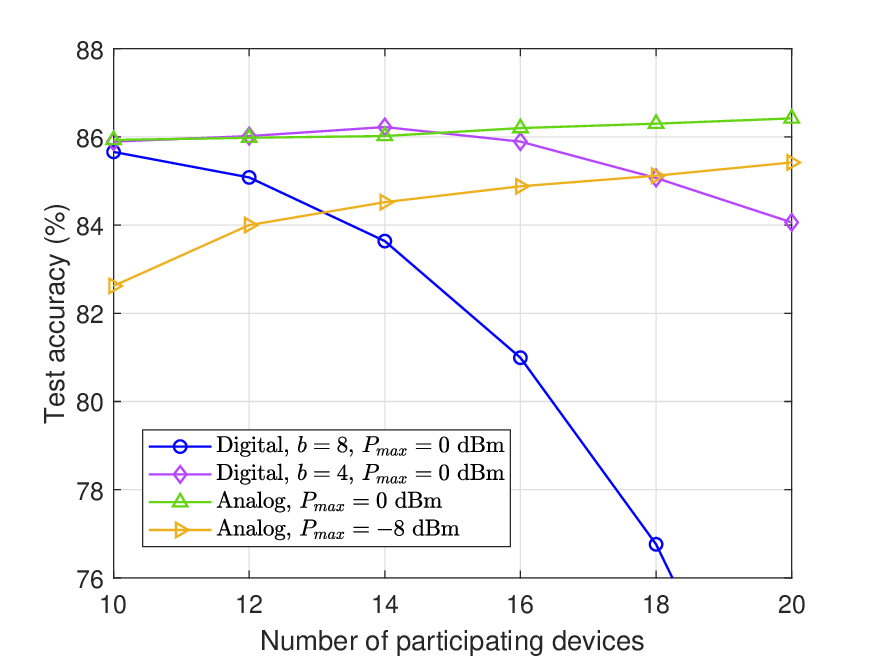}
\caption{ Test accuracy versus the number of participating devices.}\label{fig5}
\vspace{-0.5cm}
\end{figure}

In Figs. \ref{fig3} and \ref{fig32}, we depict the convergence performance for the digital and analog transmission. As shown in Fig. \ref{fig3}, we observe that the convergence rate and optimality gap under digital transmission exhibit a negative correlation with the virtual sum weight, aligning with our theoretical analysis. Moreover, the convergence behavior remains consistent with the analytical results despite the complexity of the classification task, thereby validating the accuracy of the theoretical analysis.

\begin{figure}[!t]
\centering
\includegraphics[width=3in]{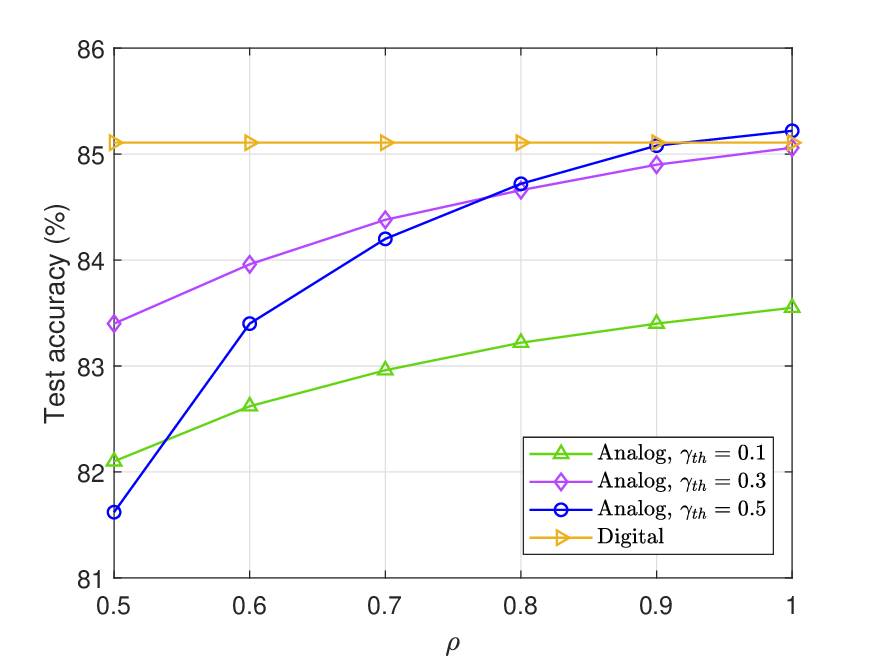}
\caption{ Test accuracy versus the accuracy level of channel estimation accuracy.}\label{fig6}
\vspace{-0.5cm}
\end{figure}

For the analog case depicted in Fig. \ref{fig32}, consistent with the analytical findings, we notice that the convergence rate is negatively correlated with the virtual sum weight $g_\text{A}$, which is determined by $\rho$ and $\gamma_{\mathrm{th}}$. On the other hand, transmit power only affects the achievable optimality gap after convergence. This is because changes in transmit power only affect the equivalent power of the additive noise. Additionally, modifications in $\rho$ and $\gamma_{\mathrm{th}}$ affect the distortion of the aggregation coefficient, which in turn influences the computation error. Furthermore, the increased complexity of FL tasks renders fluctuations in the performance curve more sensitive to noise. Consequently, in the analog communication, the superimposed white Gaussian  noise is significantly severer than quantization errors observed in the digital transmission, thus leading to more pronounced fluctuations in convergence performance. It implies that for more complex learning tasks, it becomes imperative to further reduce the variance of gradient estimation to mitigate excessive fluctuations and their adverse impacts on convergence.

\subsection{Impact of Transmit Power Budget}

In Fig. \ref{fig4}, we show the test accuracy versus different transmit power budgets. It is observed that the digital transmission scheme outperforms the analog scheme, particularly with high SNR levels. In such cases, employing more quantization bits yields the best performance. Conversely, for low SNR levels, reducing the quantization bits leads to marginal performance loss, highlighting the flexibility of the digital schemes by selecting different quantization accuracies. On the other hand, the analog scheme faces significant performance limitations, particularly with the maximum transmit power budget and less CSI, due to the stringent requirements of channel inversion. Therefore, in terms of power utilization, the digital scheme is more efficient than the analog counterpart. 

\begin{figure*}[!t]
\subfigure[]{
\begin{minipage}[t]{0.5\linewidth}
  \centering
  \includegraphics[width=3in]{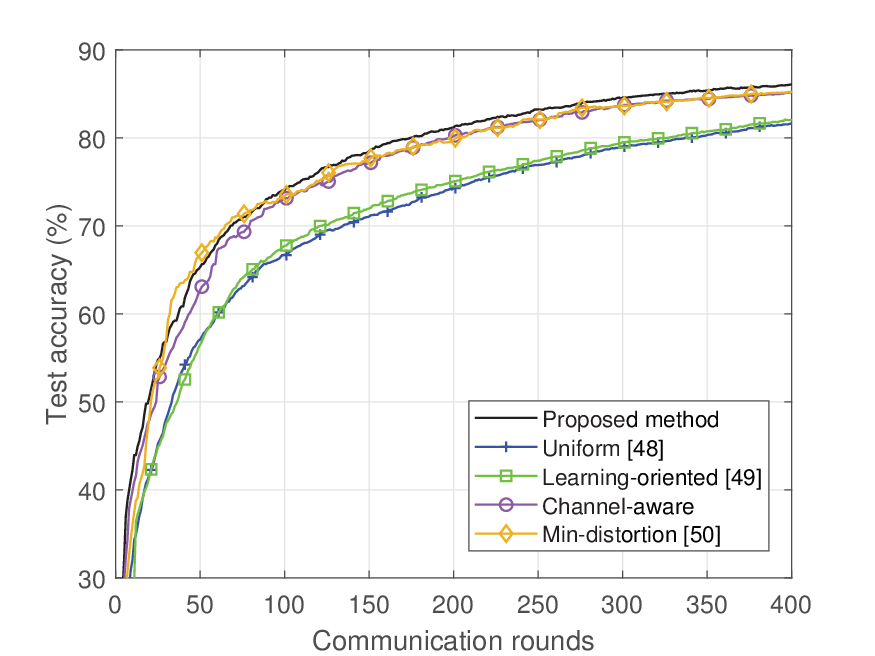}
 \end{minipage}
}
\subfigure[]{
\begin{minipage}[t]{0.5\linewidth}
  \centering
  \includegraphics[width=3in]{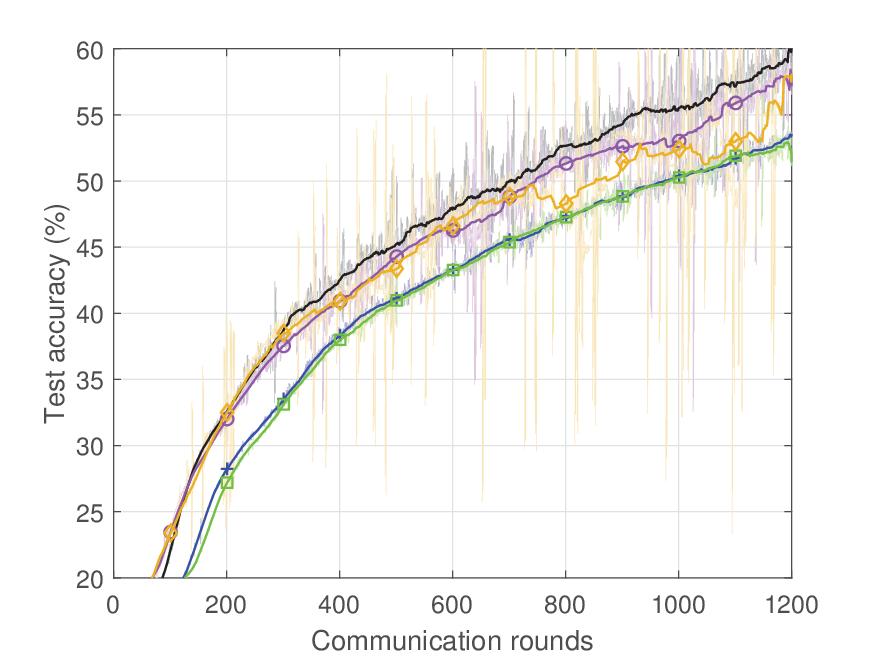}
 \end{minipage}
}
\caption{Convergence performance with different inclusion probabilities and digital transmission: (a) MNIST dataset, (b) CIFAR-10 dataset\label{fig7}.}
\vspace{-0.5cm}
\end{figure*}

\begin{figure*}[!t]
\subfigure[]{
\begin{minipage}[t]{0.5\linewidth}
  \centering
  \includegraphics[width=3in]{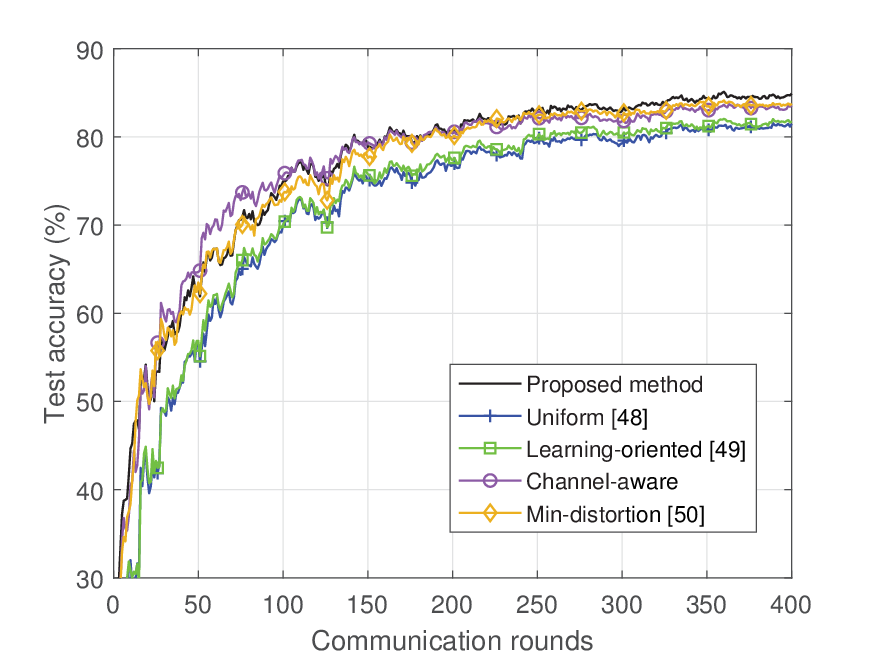}
 \end{minipage}
}
\subfigure[]{
\begin{minipage}[t]{0.5\linewidth}
  \centering
  \includegraphics[width=3in]{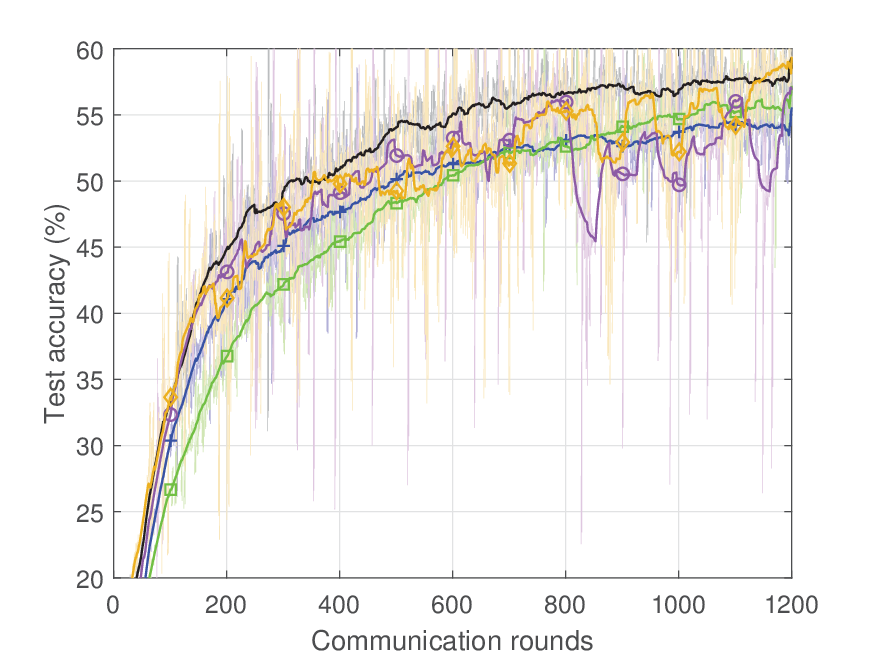}
 \end{minipage}
}
\caption{Convergence performance with different inclusion probabilities and analog transmission: (a) MNIST dataset, (b) CIFAR-10 dataset.\label{fig8}}
\vspace{-0.5cm}
\end{figure*}

\subsection{Impact of Participating Device Numbers}

Fig. \ref{fig5} illustrates the test accuracy versus the number of participating devices. We note that for the analog transmission, the test accuracy gradually increases as $N$ increases. In contrast, although the performance in digital case may be improved initially, it eventually decline rapidly as each device can only occupy a limited amount of resources, making it unable to support high-rate transmission. Consequently, the results suggest that for digital transmission, the selection of $N$ requires further optimization according to the actual conditions, with a preference for fewer devices.

\subsection{Impact of Channel Estimation Accuracy}

In Fig. \ref{fig6}, we present the impact of channel estimation accuracy on the analog case. It is evident that better performance can be achieved with more accurate CSI. 
%This is because in AirComp, the integration of communication and computation makes CSI crucial as it affects both the SNR at the receiver and the occurrence of computational errors. 
Additionally, we observe that smaller truncation thresholds are more suitable for larger $\rho$, while larger truncation thresholds are preferred for smaller $\rho$. This is because higher CSI uncertainties have a significant impact on truncation choices, necessitating looser truncation conditions to reduce incorrect choices.

\subsection{Impact of Different Inclusion Probabilities}

In Figs. \ref{fig7} and \ref{fig8}, we depict the convergence performance with different inclusion probabilities. 
For comparison, we consider the following baselines for comparison. For the sake of fairness, all schemes refrain from utilizing specific information on instantaneous CSI and gradients.
\begin{itemize}
    \item Uniform \cite{uniform}: The inclusion probabilities are uniformly assigned the same value, i.e., $p_k =\frac{1}{K}$.
    \item Learning-oriented \cite{baseline2}: From the perspective of learning algorithms, the probability is set to be proportional to the size of the local datasets, i.e.,  $p_k \propto \alpha_k$.
    \item Channel-aware: From the perspective of wireless channels, the probability is set to be proportional to the large-scale path loss, i.e.,  $p_k \propto d_k^{-\frac{\alpha}{2}}$. 
    \item Min-distortion \cite{baseline3}: To minimize the communication distortion in the analog transmission, the probability is set to be proportional to $\alpha_k d_k^{\frac{\alpha}{2}}$ by considering both the local datasets and channel conditions.
\end{itemize}

As shown in Fig. \ref{fig7}, the proposed method consistently outperforms the aforementioned baseline methods across all levels. The first two baselines neglect the influence of the wireless transmission process, resulting in performance degradation. The sampling method based on channel conditions tends to select devices with better channels, effectively reducing packet loss rates and yielding significant performance improvements. However, due to its oversight of imbalanced size of local datasets, its final performance remains inferior to our proposed method. The fourth baseline, tailored for the analog transmission scenarios, partially accounts for the impact of local datasets and wireless channels but lacks optimality, leading to limited performance gains.

As for the analog transmission case in Fig. \ref{fig8}, we note that although the performance of the optimized probability is superior, the performance gain compared to the other baselines is not significant. This limit arises from the reliance on constants $L$, $\mu$, and $\delta$ in the optimization problem (\ref{p21}), which are challenging to determine accurately in practice, thus affecting the final performance. Similarly, akin to the digital transmission, the sampling method based on channel conditions effectively mitigates the negative impact of the imperfect wireless transmission. However, its disregard for data characteristics results in suboptimal performance, particularly in the complex classification tasks on CIFAR-10 dataset, leading to significant performance fluctuations. Furthermore, the baseline method of minimizing computational distortion overlooks the impact of data heterogeneity, thus impeding its ability to achieve satisfactory performance.

\section{Conclusion}

In this paper, we have provided a detailed comparison between digital and analog transmission enabled wireless FL. To this end, we considered general transmission designs for both schemes and conducted a fair comparison between them. Then, we analyzed the convergence behavior of wireless FL in terms of the convergence rate and optimality gap under digital and analog cases, and compared the convergence performance from multiple perspectives. It was found that  digital transmission is more suitable for scenarios with sufficient radio resources and CSI uncertainties. On the other hand, analog transmission is suitable when their are massive numbers of participating devices. Next, we addressed sampling optimization for both cases, and further developed insights for optimization, which ars useful for practical deployment. Finally, experimental results illuminated the analytical results and the sampling strategies. Additionally, an explicit and precise characterization of data heterogeneity and targeted system designs with theoretical guarantees should be of our interest in the future work.

\appendices
\section{Proof of Lemma 2}\label{lemma2}
For the digital case, according to \cite[\emph{Lemma 5}]{quan1}, we first conclude that the quantized gradients $\mathcal{Q}(\mathbf{g}_m^k)$ is unbiased, i.e., 
\begin{align}
\mathbb{E}\left [ \mathcal{Q}(\mathbf{g}_m^k)\right ]=\mathbf{g}_m^k.
\end{align}
Combining with the fact that $\mathbb{E}\left[\xi_{k,\text{D}}\right]=1$ in (\ref{distortiond}), we have
\begin{align}\label{eq47}
    \mathbb{E}\left[\hat{\mathbf{g}}_{m,\text{D}}\right]&\overset{\text{(a)}}{=} \sum_{k=1}^K \alpha_k
    \mathbb{E}\left[\frac{\chi_k}{r_k}\right]\mathbb{E}\left[\xi_{k,\text{D}}\right]\mathbb{E}\left [ \mathcal{Q}(\mathbf{g}_m^k)\right ]\nonumber \\
    &=\sum_{k=1}^K \alpha_k \mathbf{g}_m^k=\mathbf{g}_m,
\end{align}
where (a) comes from the definition of $\hat{\mathbf{g}}_{m,\text{D}}$ and the independence among device sampling, small-scale fadings and stochastic quantization.

As for the analog transmission, by exploiting \cite[\emph{Lemma~1}]{imperfect}, we have $\mathbb{E}\left[ \xi_{k,\text{A}}\right]=1$. Combining with the statistical characteristic of $\chi_k$ and $\bar{\mathbf{z}}_m$ and following the same procedures in (\ref{eq47}), we get the desired conclusion, i.e., $\mathbb{E}\left[\hat{\mathbf{g}}_{m,\text{A}}\right]=\mathbf{g}_m$. The proof completes.

%and the quantization error follows
%\begin{align}
%\mathbb{E}\left [\left \Vert\mathcal{Q}(\mathbf{g}_m^k)-\mathbf{g}_m^k\right \Vert^2 \right ] &\leq \frac{d}{4}\left(\frac{ g_{m,\max}^k-g_{m,\min}^k}{2^b-1}\right)^2 \nonumber \\
%&\leq \frac{\Delta^2}{(2^b-1)^2}\triangleq \phi(b)
%\end{align}
%where $\Delta^2$ is a uniform upper bound of $\frac{d}{4}\left(g_{m,\max}^k-g_{m,\min}^k \right)^2$, $\forall m,k$.

%Starting from Lemma 1, we can easily prove that $\hat{\mathbf{g}}_{m,\text{D}}$ is also an unbiased estimate of the actual gradient in (\ref{gradient}). Moreover, the variance of quantization error is directly related to the number of quantization bits and can be completely eliminated with sufficient large $b$.

\section{Proof of Theorem 1}\label{theo1}
To begin with, we define an auxiliary variable as
\begin{align}
\hat{\mathbf{w}}_{m+1} =\tilde{\mathbf{w}}_m -\eta \mathbf{g}_m,
\end{align}
which represents the model obtained at $(m+1)$-th round via ideal communication and full participation. Then, by exploiting \emph{Assumption~2} and the fact that $\nabla F(\mathbf{w}^*)=0$, we have
\begin{align}\label{ap1}
&\mathbb{E}\left[ F(\tilde{\mathbf{w}}_{m+1})\right]-F(\mathbf{w}^*)\leq \frac{L}{2} \mathbb{E} \left [ \left\Vert \tilde{\mathbf{w}}_{m+1}-\mathbf{w}^* \right \Vert^2\right]\nonumber\\
&\quad\overset{\text{(a)}}{=}\frac{L}{2}\left(\underbrace{ \mathbb{E} \left [ \left\Vert \tilde{\mathbf{w}}_{m+1}-\hat{\mathbf{w}}_{m+1} \right \Vert^2\right]}_{A_1}+ \underbrace{\mathbb{E} \left [ \left\Vert \hat{\mathbf{w}}_{m+1}-\mathbf{w}^* \right \Vert^2\right]}_{A_2}\right),
\end{align}
where (a) is due to the fact that $\hat{\mathbf{g}}_{m,\text{D}}$ is an unbiased estimate of $\mathbf{g}_m$. 
For the term $A_1$, it is bounded by
\begin{align}\label{a1}
A_1&=\eta^2 \mathbb{E}\left [\left \Vert \hat{\mathbf{g}}_{m,\text{D}} -\mathbf{g}_m \right \Vert^2\right]\nonumber \\
&=\eta^2 \mathbb{E}\left [ \left \Vert \sum_{k=1}^K \frac{\chi_k \alpha_k\xi_{k,\text{D}}}{r_k} \mathcal{Q}(\mathbf{g}_m^k) -\sum_{k=1}^K \alpha_k \mathbf{g}_m^k \right \Vert^2\right]\nonumber \\
&\overset{\text{(a)}}{=}\eta^2 \mathbb{E}\left[\left \Vert \sum_{k=1}^K \alpha_k \left ( \frac{\chi_k\xi_{k,\text{D}}}{ r_k} \mathcal{Q}(\mathbf{g}_m^k) -\sum_{i=1}^K \alpha_i \mathbf{g}_m^i \right )\right \Vert^2 \right ]\nonumber \\
&\overset{\text{(b)}}{\leq}  \eta^2\sum_{k=1}^K \alpha_k \mathbb{E}\left[\left \Vert \frac{\chi_k\xi_{k,\text{D}}}{ r_k} \mathcal{Q}(\mathbf{g}_m^k) -\sum_{i=1}^K \alpha_i\mathbf{g}_m^i \right \Vert^2 \right ]\nonumber \\
&= \eta^2\sum_{k=1}^K \alpha_k \mathbb{E}\left[\left \Vert \left(\frac{\chi_k\xi_{k,\text{D}}}{ r_k} \mathcal{Q}(\mathbf{g}_m^k)-\mathbf{g}_{m}^k\right )\right.\right.\nonumber \\
&\quad\left.\left.+\left( \mathbf{g}_{m}^k-\sum_{i=1}^K \alpha_i\mathbf{g}_m^i \right)\right \Vert^2 \right ]\nonumber \\
&\overset{\text{(c)}}{=}\eta^2\underbrace{ \sum_{k=1}^K \alpha_k \mathbb{E}\left[\left \Vert \frac{\chi_k\xi_{k,\text{D}}}{ r_k} \mathcal{Q}(\mathbf{g}_m^k)-\mathbf{g}_{m}^k\right \Vert^2 \right ]}_{B_1}\nonumber \\
&\quad+\eta^2\underbrace{ \sum_{k=1}^K \alpha_k \mathbb{E}\left[\left \Vert \mathbf{g}_{m}^k-\sum_{i=1}^K \alpha_i\mathbf{g}_m^i \right \Vert^2 \right ]}_{B_2},
\end{align}
where (a) is because $\sum_{k=1}^K \alpha_k =1$, (b) exploits the convexity of $\Vert \cdot \Vert^2$, and (c) is due to the fact that $\mathbb{E} \left [ \frac{\chi_k\xi_{k,\text{D}}}{ r_k} \mathcal{Q}(\mathbf{g}_m^k)\right ]=\mathbf{g}_{m}^k$. According to \cite{quan2}, the variance of quantization error is bounded as
\begin{align}
\mathbb{E}\left [\left \Vert\mathcal{Q}(\mathbf{g}_m^k)-\mathbf{g}_m^k\right \Vert^2 \right ] &\leq \frac{d}{4}\left(\frac{ g_{m,\max}^k-g_{m,\min}^k}{2^b-1}\right)^2 \nonumber \\
&\leq \frac{\Delta^2}{(2^b-1)^2}\triangleq \phi(b)
\end{align}
where $\Delta^2$ is defined as a uniform upper bound of $\frac{d}{4}\left(g_{m,\max}^k-g_{m,\min}^k \right)^2$, $\forall m,k$. Then, $B_1$ is bounded by
\begin{align}\label{b1}
B_1&= \sum_{k=1}^K \alpha_k \mathbb{E}\left[ \left \Vert \left(\frac{\chi_k\xi_{k,\text{D}}}{r_k} \mathcal{Q}(\mathbf{g}_m^k)-\frac{\chi_k\xi_{k,\text{D}}}{ r_k} \mathbf{g}_m^k\right)\right. \right. \nonumber \\
&\quad \left. \left. +\left(\frac{\chi_k\xi_{k,\text{D}}}{ r_k} \mathbf{g}_m^k -\mathbf{g}_m^k\right)\right \Vert^2 \right ]\nonumber \\
&=\sum_{k=1}^K \alpha_k \mathbb{E}\left[ \left(\frac{\chi_k\xi_{k,\text{D}}}{r_k}\right)^2\right]\mathbb{E}\left[ \left \Vert \mathcal{Q}(\mathbf{g}_m^k) -\mathbf{g}_m^k \right\Vert^2\right]\nonumber \\
&\quad +\sum_{k=1}^K \alpha_k \mathbb{E}\left[ \left(\frac{\chi_k\xi_{k,\text{D}}}{r_k}-1\right)^2\right]\mathbb{E}\left[ \left \Vert \mathbf{g}_m^k \right\Vert^2\right]
\nonumber \\
&\overset{\text{(a)}}{\leq}\sum_{k=1}^K \frac{\phi(b) \alpha_k}{p_k r_k}+\sum_{k=1}^K \alpha_k \left( \frac{1}{p_k r_k}-1\right)\mathbb{E}\left[ \left \Vert \nabla F_k(\tilde{\mathbf{w}}_m) \right\Vert^2\right],
\end{align}
where (a) uses $\mathbb{E}\left[ \left(\frac{\chi_k\xi_{k,\text{D}}}{r_k}\right)^2\right]=\frac{1}{p_k r_k}$ and $\mathbb{E}\left[ \left(\frac{\chi_k\xi_{k,\text{D}}}{r_k}-1\right)^2\right]=\frac{1}{p_k r_k}-1$. Next, by expanding the square term, we reformulate $B_2$ as
\begin{align}\label{b2}
B_2&= \sum_{k=1}^K \alpha_k \mathbb{E}\left[\left \Vert \nabla F_k(\tilde{\mathbf{w}}_m)-\sum_{i=1}^K \alpha_i\nabla F_i(\tilde{\mathbf{w}}_m) \right \Vert^2 \right ]\nonumber \\
&=\sum_{k=1}^K \alpha_k \left( \mathbb{E}\left[\left \Vert \nabla F_k(\tilde{\mathbf{w}}_m)\right \Vert^2 \right]+\mathbb{E}\left[\left \Vert \sum_{i=1}^K \alpha_i\nabla F_i(\tilde{\mathbf{w}}_m)\right \Vert^2 \right]\right. \nonumber \\
&\quad \left.-2\mathbb{E}\left[ \nabla F_k(\tilde{\mathbf{w}}_m)^T \left( \sum_{i=1}^K \alpha_i\nabla F_i(\tilde{\mathbf{w}}_m)\right)\right] \right)\nonumber \\
&=\sum_{k=1}^K \alpha_k  \mathbb{E}\left[\left \Vert \nabla F_k(\tilde{\mathbf{w}}_m)\right \Vert^2 \right]-\mathbb{E}\left[\left \Vert \sum_{i=1}^K \alpha_i\nabla F_i(\tilde{\mathbf{w}}_m)\right \Vert^2 \right]\nonumber \\
&=\sum_{k=1}^K \alpha_k  \mathbb{E}\left[\left \Vert \nabla F_k(\tilde{\mathbf{w}}_m)\right \Vert^2 \right]-\mathbb{E}\left[\left \Vert \nabla F(\tilde{\mathbf{w}}_m)\right \Vert^2 \right].
\end{align}
Then for $A_2$, we have
\begin{align}\label{a2}
A_2& = \mathbb{E}\left [\left \Vert \tilde{\mathbf{w}}_m-\mathbf{w}^* -\eta  \nabla F (\tilde{\mathbf{w}}_m) \right \Vert^2 \right ]\nonumber \\
&=\mathbb{E}\left [ \Vert \tilde{\mathbf{w}}_m-\mathbf{w}^* \Vert^2 \right ]-2\eta \mathbb{E}\left[(\tilde{\mathbf{w}}_m-\mathbf{w}^*)^T \nabla F(\tilde{\mathbf{w}}_m) \right]\nonumber \\
&\quad + \eta^2 \mathbb{E}\left [ \Vert\nabla F(\tilde{\mathbf{w}}_m)\Vert^2 \right ] \nonumber \\
&\overset{\text{(a)}}{\leq} (1-\eta \mu) \mathbb{E}\left[\left \Vert \tilde{\mathbf{w}}_m -\mathbf{w}^*\right \Vert^2 \right]+ 2\eta \mathbb{E}\left[ F(\mathbf{w}^*)-F(\tilde{\mathbf{w}})\right ]\nonumber \\
&\quad + \eta^2 \mathbb{E}\left [ \Vert\nabla F(\tilde{\mathbf{w}}_m)\Vert^2 \right ]  \nonumber \\
&\overset{\text{(b)}}{\leq}\left( 1-\eta \mu \right)\mathbb{E}\left[\left \Vert \tilde{\mathbf{w}}_m -\mathbf{w}^*\right \Vert^2 \right]+ \eta^2 \mathbb{E}\left [ \Vert\nabla F(\tilde{\mathbf{w}}_m)\Vert^2 \right ] ,
\end{align}
where the inequality in (a) is due to \emph{Assumption 1}, and (b) is due to the fact that
$F(\mathbf{w}^*)-F(\mathbf{w})\leq 0$ for $\forall \mathbf{w}\in \mathbb{R}^d$.

Combining all the results in (\ref{a1})-(\ref{a2}), it yields 
\begin{align}\label{eq55}
\mathbb{E}& \left [ \left\Vert \tilde{\mathbf{w}}_{m+1}-\mathbf{w}^* \right \Vert^2\right]\nonumber \\
&\leq \left( 1-\eta \mu \right)\mathbb{E}\left[\left \Vert \tilde{\mathbf{w}}_m -\mathbf{w}^*\right \Vert^2 \right]\nonumber \\
&\quad +\sum_{k=1}^K  \frac{\eta^2\alpha_k}{p_k r_k}\mathbb{E}\left[ \left \Vert \nabla F_k(\tilde{\mathbf{w}}_m) \right\Vert^2\right]+\sum_{k=1}^K \frac{\eta^2\alpha_k \phi(b)}{p_k r_k}.
\end{align}
We further rewrite the second term in the right hand side (RHS) of (\ref{eq55}) as
\begin{align}
\mathbb{E}&\left[ \left \Vert \nabla F_k(\tilde{\mathbf{w}}_m) \right\Vert^2\right]\nonumber \\
&\overset{\text{(a)}}{=} \mathbb{E}\left[ \left \Vert \nabla F_k (\tilde{\mathbf{w}}_m)-\nabla F_k (\mathbf{w}_k^*)\right \Vert^2 \right ]\nonumber \\
&\overset{\text{(b)}}{\leq}  L^2  \mathbb{E}\left[ \left \Vert \tilde{\mathbf{w}}_m-\mathbf{w}_k^*\right \Vert^2 \right ]=L^2  \mathbb{E}\left[ \left \Vert \tilde{\mathbf{w}}_m-\mathbf{w}^* +\mathbf{w}^*-\mathbf{w}_k^*\right \Vert^2 \right ]\nonumber\\
&\overset{\text{(c)}}{\leq}2 L^2 \mathbb{E}\left[ \left \Vert \tilde{\mathbf{w}}_m-\mathbf{w}^* \right \Vert^2 \right ] +2L^2 \delta^2,
\end{align} 
where (a) comes from $\nabla F_k (\mathbf{w}_k^*)=\mathbf{0}$, (b) exploits  \emph{Assumption~2}, and (c) uses \emph{Assumption 3} and the inequality $\Vert \mathbf{a} +\mathbf{b} \Vert^2 \leq 2 \Vert \mathbf{a} \Vert^2+ 2\Vert \mathbf{b} \Vert^2$. By defining $g_{\text{D}}(\mathbf{r},b)=\sum_{k=1}^K \frac{\alpha_k}{p_k r_k}$, we conclude that
\begin{align}\label{ap2}
\mathbb{E}&\left[\left \Vert \tilde{\mathbf{w}}_{m+1} -\mathbf{w}^*\right \Vert^2 \right]\nonumber \\
&\leq \left( 1-\eta \mu +2\eta^2 L^2g_{\text{D}}(\mathbf{r},b)\right)\mathbb{E}\left[\left \Vert \tilde{\mathbf{w}}_m -\mathbf{w}^*\right \Vert^2 \right]\nonumber \\
&\quad  + \eta^2 (\phi(b)+2L^2\delta^2) g_{\text{D}}(\mathbf{r},b)\nonumber \\
&\leq \left( 1-\eta \mu +2\eta^2 L^2g_{\text{D}}(\mathbf{r},b)\right)^{m+1} \mathbb{E}\left[\left \Vert \tilde{\mathbf{w}}_{0} -\mathbf{w}^*\right \Vert^2 \right]\nonumber \\
&\quad+\frac{\eta (\phi(b) +2L^2\delta^2) g_{\text{D}}(\mathbf{r},b)}{\mu-2\eta L^2 g_{\text{D}}(\mathbf{r},b)}.
\end{align}
Plugging (\ref{ap2}) into (\ref{ap1}), we obtain the convergence result and complete the proof.

\section{Proof of Theorem 2}\label{theo2}
As for the analog transmission, the main difference from the digital transmission lies in the term $B_1$ in (\ref{a1}). With the analog case, $B_1$ is expressed as
\begin{align}\label{b1a}
B_1&=\sum_{k=1}^K \alpha_k \mathbb{E}\left[\left \Vert \frac{\chi_k\xi_{k,\text{A}}}{ r_k} \mathbf{g}_m^k+\bar{\mathbf{z}}_m -\mathbf{g}_{m}^k\right \Vert^2 \right ]\nonumber \\
&= \sum_{k=1}^K \alpha_k \mathbb{E}\left[ \left \Vert \left(\frac{\chi_k\xi_{k,\text{A}}}{r_k} -1\right)\mathbf{g}_m^k\right \Vert^2 \right ]+\mathbb{E}\left[\Vert \bar{\mathbf{z}}_m \Vert^2 \right]\nonumber \\
&=\sum_{k=1}^K \alpha_k \mathbb{E}\left[ \left(\frac{\chi_k\xi_{k,\text{A}}}{r_k}-1\right)^2\right]\mathbb{E}\left[ \left \Vert \nabla F_k(\tilde{\mathbf{w}}_m) \right\Vert^2\right]\nonumber \\
&\quad +\mathbb{E}\left[\Vert \bar{\mathbf{z}}_m \Vert^2 \right].
\end{align}
For the equivalent noise term, recalling that $\bar{\mathbf{z}}_m=\frac{\Re\left\{\mathbf{z}_m\right\}}{\zeta}$, we first derive the scaling factor $\zeta$. Constrained by the transmit power budget in (\ref{pmax}), the scaling factor $\zeta$ must satisfy
\begin{align}\label{a28}
\max_{k\in \mathcal{S}_m} \left\{\frac{\zeta^2 e^{2\gamma_{\mathrm{th}}} \alpha_k^2 d_k^{\alpha} }{\rho^2 r_k^2 \vert \hat{h}_k \vert^2}\mathbb{E}\left[ \left \Vert\mathbf{g}_m^k \right \Vert^2 \right] \right\} \leq P_{\max}. 
\end{align}
Based on \emph{Assumption 3} and the definition in (\ref{eq3}), we can conclude that 
\begin{align} \label{eq58}
\left \Vert \mathbf{g}_m^k \right \Vert \leq \frac{1}{D_k} \sum_{\mathbf{u}\in \mathcal{D}_k}\left \Vert \nabla \mathcal{L}(\mathbf{w}_m,\mathbf{u}) \right \Vert\leq \gamma.
\end{align}
Note that for all the activated devices, we have $\vert \hat{h}_k \vert^2\geq \gamma_{\mathrm{th}}$. Hence, we select the feasible $\zeta$ as
\begin{align}\label{zeta}
\zeta = \frac{\rho \sqrt{P_{\max}\gamma_{\mathrm{th}}}} {\gamma e^{\gamma_{\mathrm{th}}}}\min_k \left \{\frac{r_k}{\alpha_k }d_k^{-\frac{\alpha}{2}}\right\}.
\end{align}
Then, we have 
\begin{align}
\mathbb{E}\left[\Vert \bar{\mathbf{z}}_m \Vert^2 \right]=\frac{B N_0 \gamma^2 e^{2\gamma_{\mathrm{th}}}}{2P_{\max}\rho^2 \gamma_{\mathrm{th}}} \max_k\left \{\frac{\alpha_k^2}{r_k^2 }d_k^{\alpha}\right\}.
\end{align}
Next, the variance of the coefficient distortion, $\frac{\chi_k\xi_{k,\text{D}}}{r_k}$, is calculated as
\begin{align}\label{cov}
\mathbb{E}&\left[ \left(\frac{\chi_k\xi_{k,\text{A}}}{r_k}-1\right)^2\right]\overset{\text{(a)}}{=}r_k \mathbb{E}\left[ \left(\frac{\xi_{k,\text{A}}}{r_k}-1\right)^2\right]+1-r_k \nonumber \\
&=r_k \left( \mathbb{E}\left [ \left.\left(\frac{\Re\{h_k^* \hat{h}_k\}\mathrm{e}^{\gamma_{\mathrm{th}}}}{\vert \hat{h}_k \vert^2 \rho r_k }-1\right)^2 \right \vert \vert \hat{h}_k \vert^2\geq \gamma_{\mathrm{th}} \right ]\right.\nonumber \\
&\quad \left. \times\Pr\left \{ \vert \hat{h}_k \vert^2\geq \gamma_{\mathrm{th}}\right\}+\Pr\left \{ \vert \hat{h}_k \vert^2< \gamma_{\mathrm{th}}\right\}\right)+1-r_k \nonumber\\
&= r_k e^{-\gamma_{\mathrm{th}}}\mathbb{E}\left [ \left.\left(\frac{\Re\{h_k^* \hat{h}_k\}\mathrm{e}^{\gamma_{\mathrm{th}}}}{\vert \hat{h}_k \vert^2 \rho r_k}-1\right)^2 \right \vert \vert \hat{h}_k \vert^2\geq \gamma_{\mathrm{th}} \right ]\nonumber \\
&\quad + 1-r_k e^{-\gamma_{\mathrm{th}}}\nonumber \\
&\overset{\text{(b)}}{=}\left(e^{\gamma_{\mathrm{th}}}+\frac{(1-\rho^2)\mathrm{E}_1\left(\gamma_{\mathrm{th}} \right)e^{2\gamma_{\mathrm{th}}}}{2\rho^2}\right)\frac{1}{r_k}-1,
\end{align}
where (a) exploits the independence of $\chi_k$ and $\xi_{k,\text{A}}$, (b) is due to \cite[Eq.~(25)]{imperfect}. Substituting (\ref{cov}) into (\ref{b1a}) and combining the results in (\ref{b2}) and (\ref{a2}), we complete the proof.

\section{Proof of Corollary 2}\label{coll1}
To begin with, the expectation of $\vert \beta_k \vert^2$ is calculated as 
\begin{align}
\mathbb{E}\left[\vert \beta_k \vert^2 \right]&=\frac{\zeta^2 \lambda^2 \alpha_k d_k^\alpha}{r_k^2} \mathbb{E}\left[\frac{1}{\vert \hat{h}_k \vert^2}\right]\nonumber \\
&\overset{\text{(a)}}{=} \frac{\zeta^2 \lambda^2 \alpha_k d_k^\alpha }{r_k^2}\mathrm{E}_1(\gamma_{\mathrm{th}}),
\end{align}
where (a) comes from the fact that $\vert \hat{h}_k \vert^2$ follows an exponential distribution and the integral $\int_{\gamma_{\mathrm{th}}}^{\infty} \frac{1}{x}e^{-x}\mathrm{d}x = \mathrm{E}_1(\gamma_{\mathrm{th}})$. Substituting (\ref{eq58}) into (\ref{pave}), we get a feasible $\zeta$ as
\begin{align}
\zeta=\frac{\rho \sqrt{P_{\mathrm{ave}}}}{\gamma e^{\gamma_{\mathrm{th}}}\sqrt{\mathrm{E}_1(\gamma_{\mathrm{th}})}}\min_k \left \{\frac{r_k}{\alpha_k }d_k^{-\frac{\alpha}{2}}\right\}.
\end{align}
Then, following the same steps as in Appendix \ref{theo2}, we complete the proof.


\begin{thebibliography}{1}
\bibliographystyle{IEEEtran}

\bibitem{icc}
J. Yao, W. Xu, Z. Yang, X. You, M. Bennis, and H. V. Poor, ``Digital versus analog transmissions for federated learning over wireless networks," accepted by \emph{Proc. IEEE Int. Conf. Commun. (ICC)}, 2024. Available: https://arxiv.org/abs/2402.09657.

\bibitem{6g1}
Y. C. Eldar, A. Goldsmith, D. Gund\"{u}z, and H. V. Poor, \emph{Machine Learning and Wireless Communications.} Cambridge, UK: Cambridge University Press, 2022.

\bibitem{6g2}
W. Saad, M. Bennis, and M. Chen, ``A vision of 6G wireless systems: Applications, trends, technologies, and open research problems,” \emph{IEEE Netw.}, vol. 34, no. 3, pp. 134--142, May/Jun. 2020.

\bibitem{ris2}
W. Shi \emph{et al.}, ``Intelligent reflection enabling technologies for integrated and green Internet-of-Everything beyond 5G: Communication, sensing, and security," \emph{IEEE Wireless Commun.},  vol. 30, no. 2, pp. 147–154, Apr. 2023.

\bibitem{xujindan}
Z. He \textit{et al.}, ``Unlocking potentials of near-field propagation: ELAA-empowered integrated sensing and communication," 2024, arXiv:2404.18587.

\bibitem{xu}
W. Xu \emph{et al.}, ``Edge learning for B5G networks with distributed signal processing: Semantic communication, edge computing, and wireless sensing," \emph{IEEE J. Sel. Topics Signal Process.}, vol. 17, no. 1, pp. 9--39, Jan. 2023.

\bibitem{pushai}
G. Zhu \emph{et al.}, ``Pushing AI to wireless network edge: An overview on integrated sensing, communication, and computation towards 6G," \emph{Sci.
China Inf. Sci.}, vol. 66, no. pp. 130301:1--19, Feb. 2023.

\bibitem{mzchen}
M. Chen et al., ``Distributed learning in wireless networks: Recent progress and future challenges," \emph{IEEE J. Sel. Areas Commun.}, vol. 39, no. 12, pp. 3579--3605, Dec. 2021.

\bibitem{fl1}
B. McMahan, E. Moore, D. Ramage, S. Hampson, and B. A. Arcas, ``Communication-efficient learning of deep networks from decentralized data," \emph{in Proc. Int. Conf. Artif. Intell. Statist.}, 2017, pp. 1273--1282.

\bibitem{fl2}
Y. Yang, Z. Zhang, and Q. Yang, ``Communication-efficient federated learning with binary neural networks," \emph{IEEE J. Sel. Areas Commun.}, vol.~39, no. 12, pp. 3836--3850, Dec. 2021.

\bibitem{lsfan1}
Y. Guo, R. Zhao, S. Lai, L. Fan, X. Lei, and G. K. Karagiannidis, ``Distributed machine learning for multiuser mobile edge computing systems," \emph{IEEE J. Sel. Topics Signal Process.}, vol. 16, no. 3, pp. 460--473, Apr. 2022.


\bibitem{wfl1}
M. Chen, Z. Yang, W. Saad, C. Yin, H. V. Poor, and S. Cui, ``A joint learning and communications framework for federated learning over wireless networks," \emph{IEEE Trans. Wireless Commun.}, vol. 20, no. 1, pp. 269--283, Jan. 2021.

\bibitem{energy}
Z. Yang, M. Chen, W. Saad, C. S. Hong, and M. Shikh-Bahaei, ``Energy efficient federated learning over wireless communication networks," \emph{IEEE Trans. Wireless Commun.}, vol. 20, no.~3, pp. 1935--1949, Mar. 2021.

\bibitem{wf2}
M. M. Amiri and D. G\"{u}nd\"{u}z, ``Federated learning over wireless fading channels," \emph{IEEE Trans. Wireless Commun.}, vol. 19, no. 5, pp. 3546--3557, May 2020.


\bibitem{sp}
T. Li, A. K. Sahu, A. Talwalkar, and V. Smith, ``Federated learning: Challenges, methods, and future directions," \emph{IEEE Signal Process. Mag.}, vol. 37, no. 3, pp. 50--60, May 2020.

\bibitem{mz2}
M. Chen,  N. Shlezinger, H. V. Poor, Y. C. Eldar, and S. Cui , ``Communication-efficient federated learning,” \emph{Proc. Nat. Acad. Sci.}, vol.~118, no. 17, 2021.

\bibitem{mz3}
N. Shlezinger, M. Chen, Y. C. Eldar, H. V. Poor, and S. Cui, ``UVeQFed: Universal vector quantization for federated learning," \emph{IEEE Trans. Signal Process.}, vol. 69, pp. 500--514, 2021.


\bibitem{unreliable}
M. Salehi and E. Hossain, ``Federated learning in unreliable and resource-constrained cellular wireless networks," \emph{IEEE Trans. Commun.}, vol. 69, no. 8, pp. 5136--5151, Aug. 2021.


\bibitem{quan1}
S. Zheng, C. Shen and X. Chen, ``Design and analysis of uplink and downlink communications for federated learning," \emph{IEEE J. Sel. Areas Commun.}, vol. 39, no. 7, pp. 2150--2167, Jul. 2021.

\bibitem{quan2}
Y. Wang, Y. Xu, Q. Shi, and T.-H. Chang, ``Quantized federated learning under transmission delay and outage constraints," \emph{IEEE J. Sel. Areas Commun.}, vol. 40, no. 1, pp. 323--341, Jan. 2022.


\bibitem{quan3}
M. Lan, Q. Ling, S. Xiao, and W. Zhang, ``Quantization bits allocation for wireless federated learning," \emph{IEEE Trans. Wireless Commun.}, vol. 22, no. 11, pp. 8336--8351, Nov. 2023.

\bibitem{onebit_RIS}
H. Li, R. Wang, W. Zhang, and J. Wu, ``One bit agregation for federated edge learning with reconfigurable intelligent surface: Analysis and optimization," \emph{IEEE Trans. Wireless Commun.}, vol. 22, no. 2, pp. 872--888, Feb. 2023.


\bibitem{ye}
H. Ye, L. Liang, and G. Y. Li, ``Decentralized federated learning with unreliable communications," \emph{IEEE J. Sel. Topics Signal Process.}, vol. 16, no. 3, pp. 487--500, Apr. 2022.

\bibitem{gomore}
J. Yao, Z. Yang, W. Xu, M. Chen, and D. Niyato, ``GoMORE: Global model reuse for rescource-constrained wireless federated learning," \emph{IEEE Wireless Lett.}, vol. 12, no. 9, pp. 1543--1547, Sept. 2023.

\bibitem{pruning}
S. Liu, G. Yu, R. Yin, J. Yuan, L. Shen, and C. Liu, ``Joint model pruning and device selection for communication-efficient federated edge learning," \emph{IEEE Trans. Commun.}, vol. 70, no. 1, pp. 231--244, Jan. 2022.

\bibitem{poor}
H. H. Yang, Z. Chen, T. Q. S. Quek, and H. V. Poor, ``Revisiting analog over-the-air machine learning: The blessing and curse of interference," \emph{IEEE J. Sel. Topics Signal Process.}, vol. 16, no. 3, pp. 406--419, Apr. 2022.

\bibitem{analog1}
T. Sery and K. Cohen, ``On analog gradient descent learning over multiple access fading channels," \emph{IEEE Trans. Signal Process.}, vol. 68, pp. 2897--2911, Apr. 2020.

\bibitem{zhu}
G. Zhu, Y. Wang, and K. Huang, ``Broadband analog aggregation for low-latency federated edge learning," \emph{IEEE Trans. Wireless Commun.}, vol. 19, no. 1, pp. 491--506, Jan. 2020.

\bibitem{zhu2}
X. Cao, G. Zhu, J. Xu, and S. Cui, ``Transmission power control for over-the-air federated averaging at network edge," \emph{IEEE J. Sel. Areas Commun.}, vol. 40, no. 5, pp. 1571--1586, May 2022. 


\bibitem{powercontrol}
X. Yu, B. Xiao, W. Ni, and X. Wang, ``Optimal adaptive power control for over-the-air federated edge learning under fading channels," \emph{IEEE Trans. Commun.}, vol. 71, no. 9, pp. 5199--5213, Sept. 2023.

\bibitem{jointpower}
W. Guo \emph{et al.}, ``Joint device selection and power control for wireless federated learning," \emph{IEEE J. Sel. Areas Commun.}, vol. 40, no. 8, pp. 2395--2410, Aug. 2022.

\bibitem{robust}
F. Ang \emph{et al.}, ``Robust federated learning with noisy communication," \emph{IEEE Trans. Commun.}, vol. 68, no. 6, pp. 3452--3464, Jun. 2020.



\bibitem{shi}
K. Yang, T. Jiang, Y. Shi, and Z. Ding, ``Federated learning via over-the-air computation,"  \emph{IEEE Trans. Wireless Commun.}, vol. 19, no. 3, pp. 2022--2035, Mar. 2020.

\bibitem{imperfect}
J. Yao, Z. Yang, W. Xu, D. Niyato, and X. You, ``Imperfect CSI: A key factor of uncertainty to over-the-air federated learning," \emph{IEEE Wireless Lett.}, vol. 12, no. 12, pp. 2273--2277, Dec. 2023.

\bibitem{haoy}
K. Guo, Z. Chen, H. H. Yang, and T. Q. S. Quek, ``Dynamic scheduling for heterogeneous federated learning in private 5G edge networks," \emph{IEEE J. Sel. Topics Signal Process.}, vol. 16, no. 1, pp. 26--40, Jan. 2022.

\bibitem{wei}
X. Wei and C. Shen, ``Federated learning over noisy channels: Convergence analysis and design examples," \emph{IEEE Trans. Cogn. Commun. Netw.}, vol. 8, no. 2, pp. 1253--1268, Jun. 2022.

\bibitem{privacy}
D. Liu and O. Simeone, ``Privacy for free: Wireless federated learning via uncoded transmission with adaptive power control," \emph{IEEE J. Sel. Areas Commun.}, vol. 39, no. 1, pp. 170--185, Jan. 2021.

\bibitem{comp}
X. Li, G. Zhu, Y. Gong, and K. Huang, ``Wirelessly powered data aggregation for IoT via over-the-air function computation: Beamforming and power control," \emph{IEEE Trans. Wireless Commun.}, vol. 18, no. 7, pp. 3437--3452, Jul. 2019.

\bibitem{deploy}
Z. Lin, X. Li, V. K. N. Lau, Y. Gong, and K. Huang, ``Deploying federated learning in large-scale cellular networks: Spatial convergence analysis," \emph{IEEE Trans. Wireless Commun.}, vol. 21, no. 3, pp. 1542--1556, Mar. 2022.

\bibitem{wf3}
M. Mohammadi Amiri and D. G\"{u}nd\"{u}z, ``Machine learning at the wireless edge: Distributed stochastic gradient descent over-the-air," \emph{IEEE Trans. Signal Process.}, vol. 68, pp. 2155--2169, 2020.

\bibitem{d2d}
H. Xing, O. Simeone, and S. Bi, ``Federated learning over wireless device-to-device networks: Algorithms and convergence analysis," \emph{IEEE J. Sel. Areas Commun.}, vol. 39, no. 12, pp. 3723--3741, Dec. 2021.

\bibitem{importance}
E. Rizk, S. Vlaski, and A. H. Sayed, , ``Federated learning under importance sampling," \emph{IEEE Trans. Signal Process.}, vol. 70, pp. 5381--5396, 2022.

\bibitem{dinkel}
 W. Dinkelbach, ``On nonlinear fractional programming," \emph{Manage. Sci.}, vol. 133, no. 7, pp. 492--498, Mar. 1967.
 
\bibitem{fp2}
Z. He \emph{et al.}, ``Energy efficient beamforming optimization for integrated sensing and communication," \emph{IEEE Wireless Commun. Lett.}, vol. 11, no. 7, pp. 1374--1378, Jul. 2022.

\bibitem{cvx}
M. Grant and S. Boyd. (2016). CVX: MATLAB Software for Disciplined Convex Programming. [Online]. Available: http://cvxr.com/cvx

\bibitem{value1}
J. Wang, Y. Mao, T. Wang, and Y. Shi, ``Green federated learning over cloud-RAN with limited fronthaul capacity and quantized neural networks," \emph{IEEE Trans. Wireless Commun.}, vol. 23, no. 5, pp. 4300--4314, May 2024.

\bibitem{value2}
M. M. Amiri, D. G\"{u}nd\"{u}z, S. R. Kulkarni, and H. V. Poor, ``Convergence of federated learning over a noisy downlink," \emph{IEEE Trans. Wireless Commun.}, vol. 21, no. 3, pp. 1422--1437, Mar. 2022.

\bibitem{srpm}
J. Yao \emph{et al.}, ``Superimposed RIS-phase modulation for MIMO communications: A novel paradigm of information transfer," \emph{IEEE Trans. Wireless Commun.}, vol. 23, no. 4, pp. 2978--2993, Apr. 2024.

\bibitem{uniform}
X. Li, K. Huang, W. Yang, S. Wang, and Z. Zhang, ``On the convergence of FedAvg on non-iid data." in \emph{Proc. Int. Conf. Learn. Representations (ICLR)}, 2019.

\bibitem{123}
W. Shi \emph{et al.}, ``On secrecy performance of RIS-assisted MISO systems over Rician channels with spatially random eavesdroppers," \emph{IEEE Trans. Wireless Commun.}, early access. 2024.

\bibitem{baseline2}
T. Li \emph{et al.}, ``Federated optimization in heterogeneous networks,” in \emph{Proc. Mach. Learn. Syst.}, 2020, pp. 429--450.

\bibitem{baseline3}
Y. Sun, Z. Lin, Y. Mao, S. Jin, and J. Zhang, ``Channel and gradient-importance aware device scheduling for over-the-air federated learning," \emph{IEEE Trans. Wireless Commun. (Early Access).} 2023.


\end{thebibliography}
\end{document}